\documentclass[journal]{IEEEtran}

\usepackage[titles]{tocloft}

\usepackage{comment}
\usepackage{braket}
\usepackage{amsmath,amssymb,amsfonts,dsfont,amsthm}
\usepackage{revsymb}
\usepackage{algorithmic}
\usepackage{graphicx}
\usepackage{textcomp}
\usepackage{xcolor}
\usepackage[colorlinks=true,linkcolor=black,anchorcolor=black,citecolor=black,filecolor=black,menucolor=black,runcolor=black,urlcolor=black]{hyperref}
\usepackage{orcidlink}

\usepackage[style=ieee,giveninits=false,sorting=none,minnames=3,maxnames=10,dashed=false]{biblatex}

\addbibresource{ID.bib}

%Redefined theorem-like numbering, since we have only 11 items in the whole paper
%\newtheorem{theorem}{Theorem}[section]
\newtheorem{theorem}{Theorem}
\newtheorem{corollary}[theorem]{Corollary}
\newtheorem{lemma}[theorem]{Lemma}
\newtheorem{definition}[theorem]{Definition}

\newtheorem{example}[theorem]{Example}
\newtheorem{remark}[theorem]{Remark}

\newcommand\myfrac[2]{\genfrac{}{}{0pt}{}{#1}{#2}}

\newlength{\blank}
\settowidth{\blank}{\emph{~}}

\newcommand{\EE}{\mathbb{E}}
\newcommand{\RR}{\mathbb{R}}

\newcommand{\cB}{\mathcal{B}}
\newcommand{\cC}{\mathcal{C}}

\newcommand{\cE}{\mathcal{E}}
\newcommand{\cL}{\mathcal{L}}

\newcommand{\cN}{\mathcal{N}}
\newcommand{\cP}{\mathcal{P}}
\newcommand{\cR}{\mathcal{R}}
\newcommand{\cS}{\mathcal{S}}
\newcommand{\cT}{\mathcal{T}}

\newcommand{\cX}{\mathcal{X}}
\newcommand{\cY}{\mathcal{Y}}

\newcommand{\1}{\openone} %identity matrix on Hilbert space
 %identity channel on trace class

\newcommand{\ox}{\otimes}

\DeclareMathOperator{\Tr}{Tr}

\DeclareFontFamily{OMX}{yhex}{}
\DeclareFontShape{OMX}{yhex}{m}{n}{<->yhcmex10}{}
\DeclareSymbolFont{yhlargesymbols}{OMX}{yhex}{m}{n}
\DeclareMathAccent{\wideparen}{\mathord}{yhlargesymbols}{"F3}

%%%%%%%%%%%%%%%%%%%%%%%%%%%%%%%%%%%%%%%%%%%%%%%%%%%%%%

\begin{document}

\title{Rate-reliability tradeoff\protect\\ for deterministic identification\thanks{This paper has been accepted for publication at IEEE Transactions on Communications. A preliminary version of the present work was presented at the 2025 IEEE International Conference on Communications, Montreal QB, 8-12 June 2025 \cite{CDBW:Reliability_ICC}; and the recent results session of the 2025 IEEE International Symposium on Information Theory, Ann Harbor MI (USA), 22-27 June 2025.}\thanks{HB and CD are supported by the German Federal Ministry of Research, Technology and Space (BMFTR) within the national initiative on 6G Communication Systems through the research hub 6G-life, grants 16KISK002 and 16KISK263, within the national initiative on Post Shannon Communication (NewCom), grants 16KIS1003K and 16KIS1005, within the national initiative ``QuaPhySI'', 
%-- Quantum Physical Layer Service Integration, 
grants 16KISQ1598K and 16KIS2234, and within the national initiative ``QTOK'', 
%-- Quantum tokens for secure authentication in theory and practice'', 
grant 16KISQ038. 
HB has further received funding from the German Research Foundation (DFG) within Germany’s Excellence Strategy EXC-2092-390781972. 
AW is supported by the European Commission QuantERA grant ExTRaQT (Spanish MICIN project PCI2022-132965), by the Spanish MICIN (projects PID2019-107609GB-I00 and PID2022-141283NB-I00) with the support of FEDER funds, by the Spanish MICIN with funding from European Union NextGenerationEU (PRTR-C17.I1) and the Generalitat de Catalunya, and by the Alexander von Humboldt Foundation.
PC and AW are furthermore supported by the Institute for Advanced Study of the Technical University Munich (IAS-TUM).}}
\author{Pau~Colomer\textsuperscript{\orcidlink{0000-0002-0126-4521}}\thanks{P. Colomer (pau.colomer@tum.de) is with the Chair of Theoretical Information Technology, Technische Universit\"at M\"unchen (LTI-TUM), Theresienstra{\ss}e 90, 80333 M\"unchen, Germany; and IAS-TUM, Lichtenbergstra{\ss}e 2a, 85748 Garching, Germany.},~\IEEEmembership{Student Member,~IEEE}, 
Christian~Deppe\textsuperscript{\orcidlink{0000-0003-3047-3549}}\thanks{C. Deppe (christian.deppe@tu-bs.de) is with the Institute for Communications Technology and the 6G-life research hub in Technische Universit\"at Braunschweig, Schleinitzstra{\ss}e 22, 38106 Braunschweig, Germany 
%and was with the Institute for Communications Engineering, TUM
.},~\IEEEmembership{Senior Member,~IEEE},\protect\\
%Technische Universit\"at M\"unchen, Theresienstra{\ss}e 90, 80333 M\"unchen, Germany.},~\IEEEmembership{Member,~IEEE},\protect\\
Holger~Boche\textsuperscript{\orcidlink{0000-0002-8375-8946}}\thanks{H. Boche (boche@tum.de) is with LTI-TUM and the 6G-life research hub in Theresienstra{\ss}e 90, 80333 M\"unchen, Germany; Munich Center for Quantum Science and Technology, Schellingstra{\ss}e 4, 80799 München, Germany; and Munich Quantum Valley, Leopoldstra{\ss}e 244, 80807 München, Germany.},~\IEEEmembership{Fellow,~IEEE},
and~Andreas~Winter\textsuperscript{\orcidlink{0000-0001-6344-4870}}\thanks{A. Winter (andreas.winter@uni-koeln.de) is with Department Mathematik/ Informatik--Abteilung Informatik, Universit\"at zu K\"oln, Albertus-Magnus-Platz, 50923 K\"oln, Germany, as well as with IAS-TUM. He was previously with ICREA and Grup d'Informaci\'o Qu\`antica, Departament de F\'isica, Universitat Aut\`onoma de Barcelona, 08193 Bellaterra (Barcelona), Spain.}}
\maketitle

\thispagestyle{empty}

\begin{abstract}
We investigate deterministic identification over arbitrary memoryless channels under the constraint that the error probabilities of first and second kind are exponentially small in the block length $\mathbf{n}$, controlled by reliability exponents $\mathbf{E_1,E_2 \geq 0}$. In contrast to the regime of slowly vanishing errors, where the identifiable message length scales linearithmically as $\mathbf{\Theta(n\log n)}$, here we find that for positive exponents linear scaling is restored, now with a rate that is a function of the reliability exponents.
We give upper and lower bounds on the ensuing rate-reliability function in terms of (the logarithm of) the packing and covering numbers of the channel output set, which for small error exponents $\mathbf{E_1,E_2>0}$ can be expanded in leading order as the product of the Minkowski dimension of a certain parametrisation the channel output set and $\mathbf{\log\min\{E_1,E_2\}}$. These allow us to recover the previously observed slightly superlinear identification rates, and offer a different perspective for understanding them in more traditional information theory terms.
We also show that even if only one of the two errors is required to be exponentially small, the linearithmic scaling is lost. 
We further illustrate our results with a discussion of the case of dimension zero, and extend them to classical-quantum channels and quantum channels with tensor product input restriction.
\end{abstract}

\begin{IEEEkeywords}
Shannon theory;
identification via channels;
rate-reliability function;
finite block length;
memoryless systems;
quantum information.
\end{IEEEkeywords}

%\newpage

\section{Introduction}
\label{sec:intro}
\subsection{Background and previous work}\label{ssec:background}
%\aw{General remarks: This is the long article, so we can afford full explanations, which should be added where appropriate to the parts taken from the ICC article, and given in the new parts. (Of course, i agree that we do not need to repeat everything from \cite{CDBW:DI_classical}, but enough for the present exposition to be understandable.)}
%{\color{green}Please make sure that we provide full and complete references, as we do in our previous work: e.g. the definition of the dimension, but please review the whole text, otherwise I will have to do it. I was a bit surprised that the bib file apparently is not an upgraded copy of the previous \texttt{ID.bib} but a fresh collection; you could have saved ourselves some copy-paste work.}

%\pau{The ID.bib is now an extended copy of the previous .bib files. The references for the classical part have been checked. There are no quantum references yet, as in the first version I just referred the reader to the quantum section in \cite{CDBW:DI_classical}. We probably need to find a compromise between the full quantum intro in our previous work and the current one which is indeed too light and inconvenient for the reader.}

\IEEEPARstart{I}{n} Shannon's fundamental model of communication \cite{Shannon:TheoryCommunication}, the receiver aims to faithfully recover an original message sent through $n$ uses of a memoryless noisy channel $W:\cX\rightarrow\cY$, given by a probability transition kernel between measurable spaces $\cX$ and $\cY$. The maximum number $M$ of transmittable messages scales exponentially with $n$, and thus we define the \emph{(linear-scale) rate} of a transmission code as $R=\frac1n\log M$, with $\log$ the binary logarithm. The maximum rate for asymptotically faithful transmission -- that is, with arbitrarily small probability of error for sufficiently long code words -- is known as the \emph{capacity} $C$ of the channel \cite{Shannon:TheoryCommunication,Wolfowitz:converse}. 

This model contrasts with the \emph{identification} task in which instead of recovering the initial message, the receiver is only interested in knowing whether it is equal to one particular message of their choice or not. 
Jaja demonstrated in \cite{Ja:ID_easier} that this task involves lower communication complexity than Shannon's original transmission scenario. Building on this result, Ahlswede and Dueck \cite{AD:ID_ViaChannels} fully characterized the identification problem and analyzed its performance over general discrete memoryless channels (DMCs) using a randomized encoder.
%This task has less communication complexity than Shannon's original transmission problem \cite{Ja:ID_easier} and was fully characterized in \cite{AD:ID_ViaChannels}. 
They showed that identification codes can achieve doubly exponential growth of the number $N$ of messages as a function of the block length $n$, i.e.~$\log N \sim 2^{nR}$. In other words, we can identify exponentially more messages than we can transmit. 
In general, randomized identification codes 
%over an arbitrary channel 
can achieve this exponential scaling \cite{Han:book,LDB:mc}.
As these results rely crucially on randomized encodings, it is natural to ask what happens if we impose a deterministic approach.

\begin{definition}
\label{def:DI code}
An $(n,N,\lambda_1,\lambda_2)$ \emph{deterministic identification (DI) code} is a family $\{(u_j,\cE_j) : j\in[N]\}$ consisting of code words $u_j\in\cX^n$ and subsets $\cE_j\subset \cY^n$, such that 
%for all $j\neq k\in[N]$:
\begin{equation}
    \forall j\neq k\in[N], \quad
    W_{u_j}(\cE_j) \geq 1-\lambda_1,\quad 
    W_{u_j}(\cE_k) \leq \lambda_2,
\end{equation}
where we use the notation $W_{u_j}:=W^n(\cdot|u_j)$ for the conditional probability distributions at the output. 
\end{definition}

Notice the structural differences between transmission and identification codes: in transmission, there is a single type of error (incorrect message decoding) while in identification, two distinct errors arise. A missed identification $\lambda_1$ occurs when the correct message is not identified, and a wrong identification $\lambda_2$ when an incorrect message is mistakenly identified. The latter is directly tied to the fact that identification decoding sets $\cE_j$ may overlap, unlike the decoding regions required in transmission which are disjoint.

It was initially observed that DI leads to much poorer results in terms of code size scaling in the block length \cite{AD:ID_ViaChannels, AC:DI} than the unrestricted randomized identification. Indeed, DI over discrete memoryless channels can only lead to linear scaling of the message length, $\log N \sim Rn$ \cite{SPBD:DI_power} as in Shannon's communication paradigm (albeit with a higher rate $R$). 
Despite this poorer performance, interest in deterministic codes has recently been renewed, as they have proven to be easier to implement and simulate \cite{DI_simpler_impl}, to explicitly construct \cite{DI_explicit_construction}, and offer reliable single-block performance \cite{AD:ID_ViaChannels}; see in particular \cite{VDTB:practical-DI} for identification codes.
Also, surprisingly, certain channels with continuous input alphabets have DI codes governed by a slightly superlinear scaling in the block length: $\log N\sim Rn\log n$ (commonly referred to as \emph{linearithmic}). 
This was first observed for Gaussian channels \cite{SPBD:DI_power}, Gaussian channels with both fast and slow fading \cite{DI-fading,VDB:DI-fading-new}, and Poisson channels \cite{DI-poisson,DI-poisson_mc}. In our recent work \cite{CDBW:DI_classical,CDBW:DI_proceedings} we have generalized all these particular DI results, proving that the linearithmic scaling observed in these special cases is actually a feature exhibited by general channels with discrete output. For the rest of the paper we will in fact assume that $\cY$ is finite.
%This behaviour differs from that of randomized identification, which has in general an exponential scaling as for RI over DMCs \cite{Han:book,LDB:mc}.

It is important to emphasize the fundamental implications of different scalings beyond the value of the rate expressions. When a code achieves superlinear scaling -- for example linearithmic $\log N(n) \sim R n \log n$ --, the number of identifiable messages grows \emph{asymptotically faster} than in the linear regime $\log N(n) \sim R'n$, regardless of how large the linear rate constant might be. That is, even if a linear code achieves a higher rate $R'>R$, the superlinear code will eventually yield \emph{infinitely more} identifiable messages as $n \to \infty$. 
%The exponential scaling similarly leads to an even more dramatic asymptotic separation. 
This distinction justifies a careful attention to the growth regime. Specifically when calculating the capacity of a channel, it is crucial to define it in the correct scaling. Indeed, if we have a linearithmic growth of code length, the linear (exponential) scale rate will trivially be $\infty$ ($0$). To define a meaningful capacity value, it is essential to choose the suitable scale -- in the present case linearithmic.

In \cite{CDBW:DI_classical}, linearithmic DI capacities $\dot{C}_\text{DI}$ of general channels were analysed in the pessimistic and optimistic settings; the formal definition of the capacities can be found in Subsection \ref{ssec:linear_to_super} (for further insight on these settings see \cite{Ahlswede2006} or the discussion in \cite[Sect.~7.1]{CDBW:DI_classical}).
It was shown that they can be respectively bounded in terms of the lower and upper covering dimension $\underline{d}_M$ and $\overline{d}_M$ [an introduction and the definitions of these concepts is found in Section \ref{sec:preliminaries} below, see Equations \eqref{eq:lower_d_def} and \eqref{eq:upper_d_def}] of a certain algebraic transformation of the set of output probability distributions of the channel: 
\begin{align}
  \frac14\underline{d}_M
    &\leq \dot{C}_{\text{DI}}(W)\,\,
    \leq \liminf_{n\rightarrow\infty} \frac{1}{n\log n}\log N\,
    \leq \frac12\underline{d}_M,\label{eq:lower_d_def}\\
    \frac14\overline{d}_M
    &\leq \dot{C}_{\text{DI}}^{\text{opt}}(W) 
    \leq \limsup_{n\rightarrow\infty}\frac{1}{n\log n}\log N
    \leq \frac12\overline{d}_M.\label{eq:upper_d_def}
\end{align}
When $d=0$, such as in the case of the discrete memoryless channel, a more general and abstract theorem is provided to meaningfully bound the size of the code at the correct (in general slower, for DMCs in fact linear) scaling \cite[Thm.~5.11]{CDBW:DI_classical}.

\subsection{Motivation and preliminary problem setting}
\label{ssec:motivation}
While the DI capacity is defined as the maximum achievable (linearithmic) rate for reliable identification -- where the error probability can be made arbitrarily small with sufficiently large $n$ --, it does a priori not provide information on how quickly the errors vanish.
In contrast, it is well-known in the communication setting \cite{Shannon:TheoryCommunication,Gallager:ReliabilityBook} that at (linear-scale) rates below capacity, $0\leq R < C$, the error probability decreases exponentially with the block length: $\lambda \sim 2^{-En}$. 
In the present paper we initiate the study of the performance of DI codes when $\lambda_1$ and/or $\lambda_2$ are exponentially small, in an attempt to better understand the nature of the superlinear rates.

Anticipating our subsequent results, we associate to every $(n,N,\lambda_1,\lambda_2)$-DI code (recall Definition \ref{def:DI code}) the triple of linear-scale message and error rates $(R(n),E_1(n),E_2(n))$, where:
\begin{equation}
\!\!\!\!R(n)\! :=\! \frac1n\log N,\,\,
  E_1(n)\! :=\! \frac1n\log\frac{1}{\lambda_1},\,
  E_2(n)\! :=\! \frac1n\log\frac{1}{\lambda_2}.
\end{equation}
%Defining these quantities in the non-asymptotic regime allows us to have information on the backoff from capacity when transmitting at a certain (finite) block length required to sustain some particular probabilities of error \cite{PPV:finite_block length}. 
%\pau{Add computability results where non-as ymptotics are important?}
The main object of study in this work is the tradeoff between the linear-scale rate functions $R(n)$ and the error exponents $E_{1,2}(n)$ (with the notation $E_{1,2}$ meaning that the same equation holds for both $E_1$ and $E_2$), first in the finite block length regime, and then on the asymptotic limit $n\rightarrow\infty$. The error exponents can of course be functions of $n$ as long as $\lim_{n\rightarrow\infty}nE_{1,2}(n)=\infty$, since we need the errors to vanish in the asymptotic limit; in other words, $E_{1,2}(n)\geq\omega(1/n)$.

%The main object of study in this work is the tradeoff between the rate functions $R(n)$ and the error exponents $E_{1,2}(n)$, first in the finite block length regime, and then on the asymptotic limit $n\rightarrow\infty$. The error exponent can be a function of $n$ as long as the errors are kept small ($\lambda_{1,2}\ll 1$) in the asymptotic limit. In other words, we need $\omega(1/n)\leq E_{1,2}(n)\leq O(1)$, the first inequality follows from the condition $\lim_{n\rightarrow\infty}nE_{1,2}(n)>0$, and the second from the i.i.d. nature of the channel (which can not get better with $n$).

We call $(R,E_1,E_2)$ an \emph{(asymptotically) achievable triple} if there exist DI codes with $(R(n),E_1(n),E_2(n))$ converging to $(R,E_1,E_2)$ for $n\rightarrow\infty$. The set of all achievable triples will be referred to as the \emph{achievable region} $\cR(W):=\{(R,E_1,E_2) \text{ achievable} \}$. The rate-reliability function is then
\begin{equation}
R(E_1,E_2) := \sup_{E_1,E_2}R \quad \text{s.t.} \quad (R,E_1,E_2)\in\cR(W).
\end{equation}
Similarly, we can define an \emph{optimistic achievable region} $\overline{\cR}(W)\supseteq\cR(W)$ by allowing the convergence to happen only for a subsequence $n_k\rightarrow\infty$ of block lengths. That is to say, $\overline{\cR}(W)$ is the set of all triples $(R,E_1,E_2)$ such that there exist deterministic identification codes of block length $n_k\rightarrow\infty$ with $(R{(n_k)},E_1{(n_k)},E_2{(n_k)}) \rightarrow (R,E_1,E_2)$. The optimistic rate-reliability function is thus
\begin{equation}
\overline{R}(E_1,E_2) := \sup_{E_1,E_2} R \quad \text{s.t.} \quad (R,E_1,E_2)\in\overline{\cR}(W).
\end{equation}
Note that both $\cR(W)$ and $\overline{\cR}(W)$ are so-called \emph{corners} in $\RR^3$: they are both actually contained in the positive orthant $\RR_{\geq 0}^3$, contain the origin, and with every point $(R,E_1,E_2)$ contain the entire box $[0;R]\times[0;E_1]\times[0;E_2]$.

While in our previous work \cite{CDBW:DI_classical} we had no restrictions on the identification errors other than them being sufficiently small as $n\rightarrow\infty$, here we study in depth the performance of codes with different error settings. We start with the case where both errors vanish exponentially, for which we provide lower and upper bounds on the trade-off between rate and error exponents in the finite block length regime. The study of these bounds in the asymptotic limit allow us to show not only that codes with exponentially vanishing errors can only have linear scaling, but also to find the error conditions under which the main superlinear results in \cite{CDBW:DI_classical} can be recovered. We furthermore study the cases where only one of the two errors is constrained to vanish exponentially, showing that linearithmic scaling is also lost. Despite the current analysis being much more complete both in terms of error settings and non-asymptotic regimes, we can rely on many ideas and tools from our previous work \cite{CDBW:DI_classical}, as it was written with an eye to finite block length. 
%The present work extends the results from \cite{CDBW:DI_classical}, which are recovered from here in the limit of error exponents slowly going to $0$.

\begin{figure}[ht]
    \centering
    \includegraphics[width=0.9\linewidth]{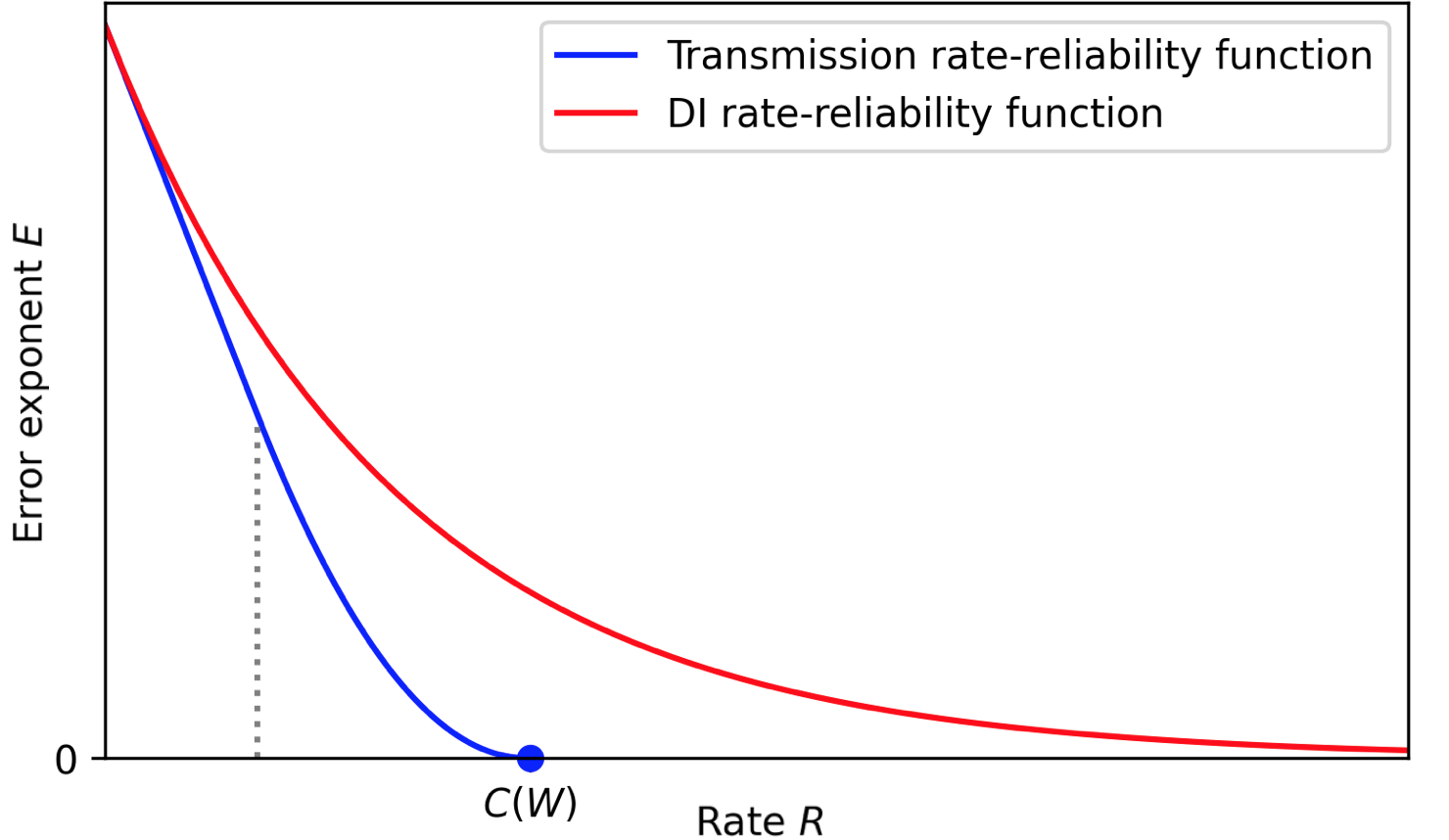}
    \caption{\small Schematic of the rate-reliability function for transmission and 
    %the present upper and lower bounds one the rate-reliability function for 
    DI. The rate-reliability function for transmission (blue line) is monotonically non-increasing with different regions depending on the optimum coding strategies (see the tendency change from linear to quadratic behaviour marked by the grey dotted line). For error exponents going to zero (very slow error decrease with the block length $n$), the linear rate converges to the transmission capacity $C(W)$; no higher rates are possible. Here, in Sections \ref{sec:bounding} and \ref{sec:discussion}, we provide bounds on the rate-reliability function for DI (red line) that show how the rate (defined in the linear scale) goes to infinity logarithmically in the error exponents $E_1 = E_2 = E \approx 0$, corresponding to the linearithmic code length in that regime. Note that the DI curve must dominate the transmission curve everywhere, since an $(n,N,\lambda)$-transmission code is automatically an $(n,N,\lambda,\lambda)$-DI code -- indeed, if we can decode a message, we can identify it.}
    \label{fig:T_vs_DI}
\end{figure}

\subsection{Outline}\label{ssec:outline}
After giving the necessary technical tools in Section \ref{sec:preliminaries}, we present upper and lower bounds on the non-asymptotic rates as functions of the error exponents in Section \ref{sec:bounding}.
We remark already here that while these bounds are stated for general exponents, they are most suitable for exponents below a certain threshold. A full account of this fact and alternative analyses for the regime of large error exponents are given in the subsequent sections.

In Section \ref{sec:discussion}, we discuss the bounds from three different angles: in \ref{ssec:linear_to_super} we show that for all (sufficiently small) $E_1,E_2>0$, the rate-reliability functions $R(E_1,E_2)$ and $\overline{R}(E_1,E_2)$ are bounded away from $0$ and $\infty$; and by letting $E_1(n)$ and $ E_2(n)$ vanish slowly, we recover the main linearithmic capacity results from \cite{CDBW:DI_classical} (see the schematic Figure \ref{fig:T_vs_DI}). In Subsection \ref{ssec:d=0} we show how to deal with channels characterised by dimension zero, where the previous bounds seem not to give the full picture of what is happening. 
Finally, in \ref{ssec:large_exponents} we discuss the regime of large error exponents (quickly vanishing errors) and why the bounds in \ref{sec:bounding} are not suitable there.

Section \ref{sec:steins} studies the case of asymmetric error exponents, that is, when we constrain one of the errors to vanish exponentially while the other is just bounded. While the results are used to broaden the previous discussion, we need a completely different approach, hence the separate section.

Finally, in Section \ref{sec:quantum}, we generalize the main bounds in Section \ref{sec:bounding} to the quantum setting for classical-quantum channels, and general quantum channels under the restriction that only product states are used on the input. We close with a brief discussion and open problems in Section \ref{sec:conclusions}.

\section{Preliminaries}
\label{sec:preliminaries}
As we have already mentioned, a number of technical ideas in this work rely on the methods developed in \cite{CDBW:DI_classical}. These are mainly based on metric studies of the output probability sets. The essential distance measures needed are defined next: the \emph{total variation distance} is a statistical distance measure which coincides with half the $L^1$ distance between the probability mass functions. Let $P$ and $Q$ be two probability distributions defined on a finite or countably infinite measurable space $\cL$, then the total variation distance is defined as
\begin{equation}\label{eq:TVD}
\!\!\!\frac{1}{2}\|P-Q\|_1 
    \!:=\sup_{L\in\cL}|P(L)-Q(L)|\!=\!\!\sum_{\ell\in\cL}\frac{1}{2} |P(\ell)-Q(\ell)|.
\end{equation}
The \emph{Bhattacharyya coefficient} (in quantum information called \emph{fidelity}) is given by $F(P,Q)=\sum_{\ell\in\cL}\sqrt{P(\ell)Q(\ell)}$, and it is related to the total variation distance by the following bounds:
\begin{equation}
  \label{eq:Classical_FvdG}
  1-F(P,Q) \leq \frac12 \|P-Q\|_1 \leq \sqrt{1-F(P,Q)^2}.
\end{equation}
Finally, following \cite[Sect.~3]{CDBW:DI_classical} we use a metric on a modified output set that comes out naturally for product distributions, and will allow us to work in a Euclidean space (enabling the use of the Minkowski dimensions which already appeared in Subsection \ref{ssec:background} above, and are defined below). For a probability distribution $P\in\cP(\cL)$, define the unit vector $\sqrt{P}:=(\sqrt{P(\ell)}:\ell\in\cL) \in \RR^{\cL}$. The image of $\cP(\cL)$ under the square root map is the non-negative orthant sector of the unit hypersphere in $\RR^{|\cL|}$, which we shall denote $S_+(|\cL|,1)$. 
As proven in \cite{CDBW:DI_classical}, these objects obey the relation
\begin{equation}
  \label{eq:fidelity->euclidean}
  1-F(P,Q)^2 
    \leq \left| \sqrt{P}-\sqrt{Q} \right|_2^2 
    \leq 2\left(1-F(P,Q)^2\right).
\end{equation}
For a channel $W:\cX \rightarrow \cY$, we denote the output probability set $\widetilde{\cX} := W(\cX) \subset \cP(\cY)$, with $W_x$ the output distribution conditioned on an input $x$ as described in Definition~\ref{def:DI code}. Hence, we have
\begin{equation}
  \sqrt{\!\widetilde{\cX}} 
    = \left\{ \sqrt{W_x} : x\in\cX \right\} 
  \subset S_+(|\cY|,1).
\end{equation}

For the code construction, the \emph{(entropy) conditional typical set} in the output is used as the identification test $\cE_j:=\mathcal{T}_{u_j}^\delta\subset\cY^n$ of the code word $u_j$. For each $x^n\in\cX^n$ it is defined as
\begin{equation}
  \label{eq:entropy-typical-set}
  \!\!\mathcal{T}_{x^n}^\delta \! 
    := \left\{ y^n\in\mathcal{Y}^n \!: 
           \left| \log W_{x^n}(y^n) + H(W_{x^n}) \right| 
           \leq \delta\sqrt{n} \right\}.
\end{equation}
The nice properties exhibited by typical sets (they collect almost all the probability, which is almost evenly distributed among their elements, cf.~\cite[Lemmas~I.11~{\&}~I.12]{winter:PhDThesis}) allow us to bound the errors of first and second kinds (when choosing $\mathcal{T}_{u_j}^\delta$ as decision rule) through the following lemmas, which will be instrumental in the next section.

\begin{lemma}[{\cite[Lemma~2.1]{CDBW:DI_classical}}]
\label{lemma:error1}
Let $K(d) := (\log\max\{d,3\})^2$. Then, for an arbitrary channel $W:\mathcal{X}\rightarrow\mathcal{Y}$, arbitrary block length $n$, $0<\delta\leq \sqrt{n}\log|\cY|$, and for any $x^n\in\mathcal{X}^n$, we have
\begin{equation}
  W_{x^n}(\mathcal{T}_{x^n}^\delta) 
   \geq 1-2\exp\left(-\delta^2/36 K(|\mathcal{Y}|)\right).
\end{equation}
\end{lemma}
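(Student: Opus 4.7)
The strategy is a concentration-of-measure argument applied to the self-information random variable. Under $Y^n \sim W_{x^n}$, write
\[
  -\log W_{x^n}(Y^n) \;=\; \sum_{i=1}^n Z_i, \qquad Z_i := -\log W(Y_i|x_i),
\]
so that the $Z_i$ are independent with $\EE Z_i = H(W_{x_i})$ and $\sum_i \EE Z_i = H(W_{x^n})$. The defining condition of $\mathcal{T}_{x^n}^\delta$ in \eqref{eq:entropy-typical-set} is then exactly the deviation event $\bigl|\sum_{i=1}^n (Z_i - \EE Z_i)\bigr| \leq \delta\sqrt{n}$, and the lemma reduces to a two-sided tail bound on a sum of $n$ independent zero-mean random variables with deviation $\delta\sqrt{n}$.

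The first step is a single-letter moment bound depending only on $|\cY|$. A direct analysis of $p \mapsto p\log^2 p$ on $(0,1]$ gives $\mathrm{Var}\bigl(-\log P(Y)\bigr) \leq K(|\cY|) = (\log\max\{|\cY|,3\})^2$ uniformly over $P\in\cP(\cY)$; the guard $\max\{\cdot,3\}$ absorbs the small-alphabet corner where the naive $\log^2|\cY|$ would be too loose (or vanish). The second step is a concentration inequality for the sum. Because each $Z_i$ is non-negative but can be arbitrarily large when $W(y|x_i)$ is tiny, a one-line Hoeffding bound on bounded differences is unavailable. The natural remedies are either a Bernstein-type inequality driven by the variance alone, or a truncation argument in which each $Z_i$ is split at a cut-off of order $\sqrt{K(|\cY|)}$ and the tail is absorbed via the second-moment estimate. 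Optimising the free parameter (the exponential tilt, or the truncation level) produces an exponential bound of the shape $2\exp\bigl(-\delta^2/(cK(|\cY|))\bigr)$; a careful bookkeeping of constants yields $c=36$. The hypothesis $\delta\leq\sqrt{n}\log|\cY|$ keeps us in the moderate-deviation window where the Gaussian-type exponent is the dominant term and the truncation error is negligible.

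The principal obstacle is exactly the unboundedness of the single-letter self-informations $Z_i$, which blocks a direct Hoeffding argument keyed to the range of the summands. The technical content of the proof is therefore identifying the correct $|\cY|$-dependent \emph{moment} parameter $K(|\cY|)$ (rather than a range parameter involving $\log(1/p_{\min})$), and patching the small-alphabet pathology with the $\max\{|\cY|,3\}$ convention; these two choices fix the precise shape of the exponent in the statement.
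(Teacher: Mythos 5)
Your proposal is correct and is essentially the argument underlying the cited lemma (imported from \cite[Lemma~2.1]{CDBW:DI_classical}, which in turn rests on Winter's thesis Lemmas I.11--I.12): reduce to a two-sided tail bound for the centred self-information sum $\sum_i\bigl(Z_i-\EE Z_i\bigr)$, control the single-letter moments uniformly via $K(|\cY|)$, and apply a Bernstein/truncation bound to tame the unbounded summands, with the hypothesis $\delta\le\sqrt n\log|\cY|$ keeping the exponent in the Gaussian regime. One minor caveat on phrasing: a Bernstein inequality is not ``driven by the variance alone'' --- one also needs a scale parameter controlling the higher centred moments, which here is supplied by the elementary bound $\sup_{p\in(0,1]}p(-\log p)^k=O\bigl((k/e)^k\bigr)$, so your truncation alternative is the cleaner way to package exactly this.
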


\begin{lemma}[{\cite[Lemma~3.1]{CDBW:DI_classical}}]\label{lemma:hyp_test_error2}
Let $x^n,{x'}^n \in \mathcal{X}^n$ be two input sequences such that the corresponding output probability distributions satisfy
$1-\frac12\left\|W_{x^n}-W_{{x'}^n}\right\|_1 \leq \epsilon$. Then,
\begin{equation}\begin{split}
    W_{{x'}^n}(\mathcal{T}_{x^n}^\delta) 
    &\leq 2\exp\left(-\delta^2/36K(|\mathcal{Y}|)\right)\\
    &\phantom{\leq.} + \epsilon\left(1 + 2^{2\delta\sqrt{n}}2^{H(W_{x^n})-H(W_{{x'}^n})}\right).
\end{split}
\end{equation}
\end{lemma}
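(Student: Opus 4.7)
The plan is to combine the standard two-sided bounds on probability within entropy-typical sets with the overlap interpretation of total variation distance. Write $A := \mathcal{T}_{x^n}^\delta$ and $T' := \mathcal{T}_{{x'}^n}^\delta$, and split
\[
  W_{{x'}^n}(A) = W_{{x'}^n}(A \setminus T') + W_{{x'}^n}(A \cap T').
\]
For the first summand I would apply Lemma~\ref{lemma:error1} directly to the input sequence ${x'}^n$, which gives $W_{{x'}^n}(\cY^n \setminus T') \leq 2\exp(-\delta^2/36K(|\cY|))$ and hence the same bound for $W_{{x'}^n}(A \setminus T')$. This produces the first term of the claimed inequality.

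For the intersection term, the key observation is that on $A \cap T'$ we can compare the two output probabilities pointwise: the definition \eqref{eq:entropy-typical-set} of the typical set gives $W_{x^n}(y^n) \geq 2^{-H(W_{x^n}) - \delta\sqrt n}$ (since $y^n \in A$) and $W_{{x'}^n}(y^n) \leq 2^{-H(W_{{x'}^n}) + \delta\sqrt n}$ (since $y^n \in T'$), so
\[
  W_{{x'}^n}(y^n) \;\leq\; C\, W_{x^n}(y^n),\quad
  C := 2^{\,2\delta\sqrt n}\, 2^{H(W_{x^n}) - H(W_{{x'}^n})}.
\]
The next step is to translate this into a bound in terms of the pointwise minimum $m(y^n) := \min\{W_{x^n}(y^n), W_{{x'}^n}(y^n)\}$. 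A brief case analysis (whether $W_{{x'}^n}(y^n) \leq W_{x^n}(y^n)$ or not) shows that on $A \cap T'$ one has $W_{{x'}^n}(y^n) \leq \max(1,C)\, m(y^n) \leq (1+C)\, m(y^n)$.

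Finally, I would sum this over $A \cap T'$ and use the hypothesis. Since $\sum_{y^n} m(y^n) = 1 - \tfrac12\|W_{x^n}-W_{{x'}^n}\|_1 \leq \epsilon$, we get
\[
  W_{{x'}^n}(A \cap T') \leq (1+C)\,\epsilon \leq 2\epsilon\bigl(1 + 2^{2\delta\sqrt n}\,2^{H(W_{x^n})-H(W_{{x'}^n})}\bigr),
\]
the last inequality being a loose rounding that matches the form stated in the lemma. Adding the two contributions yields the claim.

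I do not expect a conceptual obstacle; the main subtle point is realising that one should introduce the \emph{other} typical set $T'$ as an intermediate object. Without it there is no upper bound on $W_{{x'}^n}(y^n)$ available to combine with the lower bound on $W_{x^n}(y^n)$ that is built into the definition of $A$. Once $T'$ is introduced, the rest is bookkeeping: controlling the atypical remainder via Lemma~\ref{lemma:error1}, converting the ratio bound into a bound in terms of $m(y^n)$, and invoking the overlap interpretation of total variation.
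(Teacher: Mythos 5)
Your proof is correct, and it uses the natural approach that the cited reference also takes: split $\mathcal{T}_{x^n}^\delta$ according to whether or not $y^n$ also lies in $\mathcal{T}_{{x'}^n}^\delta$, bound the atypical remainder via Lemma~\ref{lemma:error1} applied to ${x'}^n$, and on the intersection combine the pointwise lower bound on $W_{x^n}(y^n)$ (from membership in $\mathcal{T}_{x^n}^\delta$) with the pointwise upper bound on $W_{{x'}^n}(y^n)$ (from membership in $\mathcal{T}_{{x'}^n}^\delta$) to dominate $W_{{x'}^n}(y^n)$ by a multiple of $\min\{W_{x^n}(y^n),W_{{x'}^n}(y^n)\}$, whose total mass is exactly $1-\tfrac12\|W_{x^n}-W_{{x'}^n}\|_1\leq\epsilon$. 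Your bookkeeping even yields the slightly sharper constant $(1+C)\epsilon$, with $C=2^{2\delta\sqrt n}2^{H(W_{x^n})-H(W_{{x'}^n})}$, in place of the stated $2\epsilon(1+C)$.
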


For both the coding and converse part, we will need the packing and covering of general sets in arbitrary dimensions. For a non-empty bounded subset $F$, the \emph{covering} problem consists in finding the minimum number $\Gamma_\delta(F)$ of closed balls of radius $\delta$ centered at points in $F$ such that their union contains $F$; in contrast, the \emph{packing} problem consists in finding the maximum number $\Pi_\delta(F)$ of pairwise disjoint open balls of radius $\delta$ centered at points in $F$. 
%Note that the centers of the balls in either case form a subset $F_0 \subset F$. For a covering, $F_0$ has to be such that for every $x\in F$ there exists $x_0\in F_0$ with $d(x,x_0)\leq\delta$, which is otherwise known as $\delta$-net. For a packing, the requirement is that for any $x_0\neq x_1\in F_0$, $d(x_0,x_1) \geq 2\delta$. A fundamental observation is that for every $\delta>0$ and $\eta>0$,
The packing and covering numbers are not equal in general, but they exhibit the same asymptotic behaviour in the exponent as $\delta \to 0$ \cite{Falconer:fractal}. These are fundamental objects in geometry as they can be used to define the \emph{Minkowski dimension} (a.k.a covering, Kolmogorov, or entropy dimension) of a subset in Euclidean space as 
\begin{equation}\label{eq:Minkowski_Dimension}
d_M(F) = \lim_{\delta\rightarrow0} \frac{\log \Gamma_\delta(F)}{-\log\delta} 
  = \lim_{\delta\rightarrow0} \frac{\log \Pi_\delta(F)}{-\log\delta}.
\end{equation}
While for smooth manifolds the topological and Minkowski dimensions coincide, the latter is especially relevant in theory of fractals, as it captures the geometric complexity of the set beyond the scope of the classical topological dimension, see the excellent textbook \cite{Falconer:fractal}, as well as \cite{Robinson:dimensions,Fraser:dimensions} for further insight. For even more general and irregular sets, it is quite common that the above limit [Eq.~\eqref{eq:Minkowski_Dimension}] does not exist. In that case, we define the \emph{upper} and \emph{lower Minkowski dimensions} through the limit superior and limit inferior, respectively: 
\begin{align}
 \overline{d}_M(F) &:= \limsup_{\delta\rightarrow0}\frac{\log \Gamma_\delta(F)}{-\log\delta} 
  = \limsup_{\delta\rightarrow0} \frac{\log \Pi_\delta(F)}{-\log\delta}, \\
 \underline{d}_M(F) &:= \liminf_{\delta\rightarrow0}\frac{\log \Gamma_\delta(F)}{-\log\delta}
  = \liminf_{\delta\rightarrow0} \frac{\log \Pi_\delta(F)}{-\log\delta}.\label{eq:lowerMinkowski}
\end{align}
The dimensions of the image of output probability set $\widetilde{\cX}$ and its square root $\sqrt{\!\widetilde{\cX}}$ are subject to the inequalities
$d_M(\widetilde{\cX}) 
 \leq d_M\!\left(\!\sqrt{\!\widetilde{\cX}}\right) 
 \leq 2d_M(\widetilde{\cX})$.
This is because $\widetilde{\cX}$ is a Lipschitz-continuous image of $\sqrt{\!\widetilde{\cX}}$; and conversely, $\sqrt{\!\widetilde{\cX}}$ is a $\frac12$-H\"older-continuous image of $\widetilde{\cX}$, cf.~\cite{Robinson:dimensions,Fraser:dimensions}.

\section{Bounding the rate-reliability function}
\label{sec:bounding}
We divide this section into three parts. In \ref{ssec:coding}, we present a lower bound on the rate-reliability function for DI, expressed in terms of the packing number of a transformed output probability set, where the radius is determined by the error exponents. This is followed by an analysis of the bound in the regime of small error exponents. Similarly, Subsection \ref{ssec:converse} provides the upper bound using a covering number, along with its analysis for small error exponents. Finally, in \ref{ssec:improved} we analyse the rate bounds when the error exponents are restricted to certain suitable subsets in the vicinity of $0$, raising our lower bound and decreasing our upper bound in the regime of small errors, and illuminating the separate roles of the lower and upper Minkowski dimensions as the objects of merit in pessimistic and optimistic approaches respectively, beyond the capacity value (in the asymptotic limit), for which this separation was first observed in \cite{CDBW:DI_classical}.

\subsection{Lower bound on the rate-reliability function (coding)}
\label{ssec:coding}
We present our lower bound in Theorem \ref{thm:coding}, and identify the regime where it is more tight in Remark~\ref{remark:coding}. In Corollary~\ref{cor:coding}, we analyse further the regime of small error exponents. Finally, 
we include an intuitive discussion on the meaning of the bounds.
%an intuitive approach to the bounds is given in Remark~\ref{remark:coding_intuitive}.
From now on we will use the shorthand $c=1/36K(|\cY|)$. 

\begin{theorem}
\label{thm:coding}
For any $0<t<1$, and error exponent function $E(n)>0$, there exists a DI code on block length $n$ with error exponents $E_1(n)\geq E(n)-\frac1n$ and $E_2(n)\geq E(n)-\frac3n$, and its rate lower-bounded 
\begin{equation}
\label{eq:coding_rate_final}
\begin{split}
    R(n) &\geq (1-t)\log\left[\Pi_{\sqrt[4]{\frac{6E(n)}{ct^2}}} \left(\!\sqrt{\!\widetilde{\cX}}\right)\right]\\
    &\phantom{\geq.}-H(t,1-t)-O\left(\frac{\log n}{n}\right),
\end{split}
\end{equation}
with $H(t,1-t)$ the binary entropy of the distribution $\{t,1-t\}$.
\end{theorem}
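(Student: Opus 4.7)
The proof is by an explicit code construction. Fix
\[
\delta_{\text{pack}} := \sqrt[4]{\tfrac{6E(n)}{ct^2}},\qquad \delta_{\text{typ}}^2 := \tfrac{(nE(n)+1)\ln 2}{c},
\]
take a maximal packing $\{\sqrt{W_{x_k}}\}_{k=1}^{M}\subset\sqrt{\!\widetilde{\cX}}$ of size $M=\Pi_{\delta_{\text{pack}}}(\sqrt{\!\widetilde{\cX}})$, and form candidate codewords $u=(x_{k_1},\ldots,x_{k_n})\in\cX^n$ from index strings $(k_1,\ldots,k_n)\in[M]^n$ of \emph{constant composition} (each letter of $[M]$ appearing equally often, modulo divisibility). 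The decoder for codeword $u$ is the conditional typical set $\cE_u := \mathcal{T}_u^{\delta_{\text{typ}}}$. A Gilbert--Varshamov greedy selection is then applied: starting from the set of all constant-composition index strings, codewords are picked one at a time, each time excluding all strings at Hamming distance $<tn$ from any already-chosen codeword.

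The error analysis leans on the two stated lemmas. For the first kind, Lemma \ref{lemma:error1} directly yields $1-W_{u_j}(\mathcal{T}_{u_j}^{\delta_{\text{typ}}}) \leq 2\exp(-c\delta_{\text{typ}}^2) = 2^{-nE(n)}$, so $E_1(n)\geq E(n)-1/n$. For the second kind, the packing condition together with the identity $F(W_{x_k},W_{x_{k'}}) = 1-\tfrac12\bigl|\sqrt{W_{x_k}}-\sqrt{W_{x_{k'}}}\bigr|_2^{2}$ (valid for unit vectors) gives $F(W_{x_k},W_{x_{k'}}) \leq 1-\delta_{\text{pack}}^2/2$ for $k\neq k'$. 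Multiplicativity of fidelity under products combined with $d_H(u_j,u_\ell)\geq tn$ then yields $F(W_{u_j},W_{u_\ell}) \leq (1-\delta_{\text{pack}}^2/2)^{tn} \leq \exp(-tn\delta_{\text{pack}}^2/2)$, and \eqref{eq:Classical_FvdG} turns this into a bound on $\epsilon := 1-\tfrac12\|W_{u_j}-W_{u_\ell}\|_1 \leq F(W_{u_j},W_{u_\ell})$. Constant composition forces $H(W_{u_j})=H(W_{u_\ell})$, so the entropy-difference factor in Lemma \ref{lemma:hyp_test_error2} reduces to $1$; the chosen constant $6$ in $\delta_{\text{pack}}^4 = 6E(n)/(ct^2)$ is precisely what is needed to make the product $\epsilon\cdot 2^{2\delta_{\text{typ}}\sqrt{n}} \leq 2^{-nE(n)+O(1)}$, giving $E_2(n)\geq E(n)-3/n$.

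For the code size, the number of constant-composition $n$-sequences over $[M]$ is $2^{n\log M - O(\log n)}$ by Stirling, while any Hamming ball of radius $tn-1$ in $[M]^n$ contains at most $\binom{n}{tn}(M-1)^{tn}\leq 2^{nH(t,1-t)}M^{tn}$ sequences; the greedy procedure therefore retains at least $N \geq M^{(1-t)n}\cdot 2^{-nH(t,1-t)}/\mathrm{poly}(n)$ codewords, and taking $\tfrac1n\log$ produces exactly Eq.~\eqref{eq:coding_rate_final}. The main technical obstacle is the tension in Lemma \ref{lemma:hyp_test_error2} between the exponentially large ``typicality amplification'' $2^{2\delta_{\text{typ}}\sqrt{n}}\sim 2^{2n\sqrt{E(n)\ln 2/c}}$ and the exponentially small fidelity $\epsilon$ supplied by the packing: it is this balance that dictates the fourth-root scaling $\delta_{\text{pack}}\sim E(n)^{1/4}$ (and pins down the constant $6$), rather than the naive $E(n)^{1/2}$ one would expect from treating the two ingredients in isolation. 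Divisibility issues for constant compositions and the polynomial Stirling corrections are absorbed into the $O(\log n/n)$ remainder.
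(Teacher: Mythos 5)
Your construction follows the same blueprint as the paper's proof: typical-set decoders, a letter-wise Euclidean packing of $\sqrt{\!\widetilde{\cX}}$, a Gilbert--Varshamov/Hamming-distance code over the packing alphabet, and Lemmas~\ref{lemma:error1} and~\ref{lemma:hyp_test_error2} to control the two error kinds. Where you deviate is in how the entropy-difference factor $2^{H(W_{u_j})-H(W_{u_k})}$ of Lemma~\ref{lemma:hyp_test_error2} is neutralised: you impose constant composition (so it equals~$1$), whereas the paper partitions $\cC_t$ into $\lceil n\log|\cY|\rceil$ entropy slices and keeps the largest, paying $\frac1n\log\lceil n\log|\cY|\rceil = O(\log n/n)$ with a constant depending only on $|\cY|$.

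This is where a genuine gap appears. The number of constant-composition strings over an alphabet of size $M=\Pi_{\delta_{\text{pack}}}$ is \emph{not} $2^{n\log M - O(\log n)}$ unless $M$ is bounded: Stirling gives $n\log M - \log\binom{n}{n/M,\dots,n/M} = \Theta(M\log n)$, so your polynomial correction is $O(M\log n/n)$, with a hidden constant growing with the packing number (and you also need $M\le n$ just for any constant-composition string to exist). For a fixed $E(n)$ this collapses to $O(\log n/n)$, but the whole point of the theorem is to be used with $E(n)\to 0$, as in Corollary~\ref{cor:coding} and Theorems~\ref{thm:recover_pessimistic}--\ref{thm:recover_optimistic}, where $\Pi_{\delta_{\text{pack}}}\to\infty$. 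In that regime your remainder can be of the same order as, or exceed, the main term $(1-t)\log M$, so the claimed $O(\log n/n)$ does not follow from your argument. The paper's entropy-slicing trick has no dependence on $M$ at all.

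Two smaller issues. First, with $\Pi_\delta$ defined by disjoint \emph{open balls of radius $\delta$}, packing centres satisfy $\bigl|\sqrt{W_{x_k}}-\sqrt{W_{x_{k'}}}\bigr|_2\ge 2\delta_{\text{pack}}$, so the exact identity $F=1-\tfrac12|\sqrt P-\sqrt Q|_2^2$ gives $F\le 1-2\delta_{\text{pack}}^2$, not $1-\delta_{\text{pack}}^2/2$; you are off by a factor $4$ in the coefficient. As it happens the error is in the safe direction (your stated bound is the weaker one), so the final estimate survives, but the line as written does not follow from the packing property. Second, the constant $6$ is not ``precisely'' pinned down by balancing $\epsilon$ against $2^{2\delta_{\text{typ}}\sqrt n}$: for small $E(n)$ both of these scale as $2^{\pm\Theta(n\sqrt{E(n)})}$ while the target is only $2^{-nE(n)}$, so there is considerable slack and much smaller constants would also work. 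In the paper the $6$ arises from rounding $(\sqrt 2-1)^{-2}\approx 5.83$ after solving the quadratic constraint $c\tau^2+2\tau\le t\beta^2$ in the free typicality parameter $\tau=\delta/\sqrt n$. Your route through the exact fidelity identity and multiplicativity of $F$ in place of the chain $-\ln F\ge\tfrac12(1-F^2)\ge\tfrac14|\sqrt P-\sqrt Q|_2^2$ is in fact slightly tighter and a perfectly good alternative; it is just not where the $6$ comes from.
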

\begin{proof}
Inspired by the code construction in \cite{CDBW:DI_classical}, we use the typical sets of input strings as decoding elements, for which we know that Lemmas \ref{lemma:error1} and \ref{lemma:hyp_test_error2} give us upper bounds on the errors of first and second kind. We want to impose exponentially decreasing errors, construct a code, and then analyze the resulting rate. Looking at Lemma \ref{lemma:error1} it is immediate to see that for the error of first kind, the exponential decrease follows by letting $\delta:=\tau\sqrt{n}$, so that
\(
  \lambda_1 \leq 2\exp(-c\tau^2 n).
\)

To bound the error of second kind, additional effort is required. We want to use Lemma \ref{lemma:hyp_test_error2}, which requires a minimum separation $1-\frac12\|W_{u_j}-W_{u_k}\|_1 \leq \epsilon$ between any two different code words $u_j\neq u_k\in\cX^n$. That is, the output probability distributions generated by code words need to form a packing with pairwise total variation distance being almost exponentially close to $1$. 
%For this reason and also because of
Using the left inequality in Equation~\eqref{eq:Classical_FvdG} first, and the multiplicativity of the fidelity under tensor products after, we have
\begin{equation}
  1-\frac12\|W_{x^n}-W_{{x'}^n}\|_1
   \leq F\left(W_{x^n},W_{{x'}^n}\right)
   =    \prod_{i=1}^n F\left(W_{x_i},W_{x_i'}\right).
\end{equation}
Following \cite{CDBW:DI_classical}, we study the negative logarithm of the expression above. Letting $u_j=x^n=x_1\dots x_n$ and $u_k=x'^n=x'_1\dots x'_n$, we have
\begin{equation}\begin{split}
  \label{eq:coding_TVD_to_Euc}
  -\ln\left(\! 1-\frac12\|W_{x^n}-W_{{x'}^n}\|_1\! \right)\! 
   &\geq \sum_{i=1}^n -\ln F\left(W_{x_i},W_{x_i'}\right) \\
   &%\!\!\!\!\!\!\!\!\!\!\!\!\!\!\!\!\!\!\!\!\!\!\!\!
    =    \frac12 \sum_{i=1}^n -\ln F\left(W_{x_i},W_{x_i'}\right)^2 \\
   &%\!\!\!\!\!\!\!\!\!\!\!\!\!\!\!\!\!\!\!\!\!\!\!\!
    \geq \frac12 \sum_{i=1}^n 1-F\left(W_{x_i},W_{x_i'}\right)^2 \\
   &%\!\!\!\!\!\!\!\!\!\!\!\!\!\!\!\!\!\!\!\!\!\!\!\!
    \geq \frac14 \sum_{i=1}^n \left| \sqrt{W_{x_i}}-\sqrt{W_{x_i'}}\right|_2^2. \\
   %&=    \frac14 \left| \bigoplus_{i=1}^n \sqrt{W_{x_i}} - \bigoplus_{i=1}^n \sqrt{W_{x_i'}} \right|_2^2,
\end{split}\end{equation}
Where the second line follows from basic logarithm properties, the third from the leading order of the logarithm when the fidelity is close to 1, and the last line from Equation~\eqref{eq:fidelity->euclidean}.
Notice the important simplification that this achieves: before, we had to construct and analyze a packing of full sequences of length $n$ in total variation distance, and now we just need a letter-wise Euclidean packing to ensure a distance between output probability distributions exponentially close to 1. Let $\cX_0$ be a maximum-size Euclidean packing in $\sqrt{\!\widetilde{\cX}}$ of radius $\beta$ and $\Pi_{\beta} := \Pi_{\beta}\left(\!\sqrt{\!\widetilde{\cX}}\right) = |\cX_0|$ the corresponding packing number.

Next we choose a maximum-size code $\cC_t\subset\cX_0^n$ such that the letters are elements of the packing and there is a minimum Hamming distance $d_H(x^n,x'^n)>tn$ between code words $x^n\neq x'^n\in\cC_t$. Then, in the last sum in Equation~\eqref{eq:coding_TVD_to_Euc} at least $tn$ terms are $\geq 4\beta^2$ (and could be $0$ in the others), therefore
\begin{align}
-\ln\left( 1-\frac12\|W_{x^n}-W_{{x'}^n}\|_1 \right) 
    &\geq \frac14 \sum_{i=1}^n \left| \sqrt{W_{x_i}}-\sqrt{W_{x_i'}}\right|_2^2\nonumber\\
    &\geq tn\beta^2.\label{eq_coding_epsilon_distance}
\end{align}

%\aw{(I have removed a $\frac14$ in the last equation, please check that this and the rest of the proof are consistent with the calculation in the ICC version. I think there I had removed the O(1) placeholders in the error bounds and replaced them with explicit constants, cf. the statement of the Theorem.)}
Thus we can choose $\epsilon = \exp(-t\beta^2 n)$. Using simple combinatorics, it is evident that the Hamming ball around any point in $\cX_0^n$ with radius $tn$ contains at most $\leq \binom{n}{tn} \Pi_\beta^{tn}$ elements. Therefore, any maximal code with a Hamming distance of $tn$ must have at least the following number of code words, which is the ratio of the total number of elements to the size of the Hamming ball (otherwise, the code could be extended):
\begin{equation}
  |\cC_t|\geq\Pi_\beta^{n(1-t)} 2^{-nH(t,1-t)}.
\end{equation}
The Gilbert-Varshamov bound \cite{Gilbert:combinatorics,Var:combinatorics} provides a similar (asymptotically equivalent) result, by constructing a linear code over the prime field $\mathbb{F}_p$, after reducing $\cX_0$ to the nearest smaller prime cardinality $p\geq \frac12|\cX_0|$ (Bertrand's postulate).
It only remains to bound the entropy difference $H(W_{x^n})-H(W_{x'^n})$ to apply Lemma \ref{lemma:hyp_test_error2}. As these entropies are in the interval $[0;n\log|\cY|]$, we can just partition the code into $S=\lceil n\log|\cY|\rceil$ parts $\cC_t^{(s)}$ ($s=1,\ldots,S$) in such a way that all $j\in\mathcal{C}_t^{(s)}$ have $H(W_{u_j})\in[s-1;s]$, meaning that any two $j,k\in\cC_s$ satisfy $|H(W_{u_j})-H(W_{u_k})|\leq 1$. We can now define $\cC$ as the largest $\cC_t^{(s)}$, which by the pigeonhole principle has $|\cC| \geq| \cC_t|/\lceil n\log|\cY|\rceil$ elements. 
We thus get the rate of the code as 
\begin{equation}
\label{eq:coding_rate_1}
\begin{split}
    R(n) 
    &=    \frac1n{\log|\cC|} 
    \geq \frac1n{\log|\cC_t|} 
           - \frac1n{\log \left\lceil n\log|\cY| \right\rceil} \\
    %&\geq(1-t)(\underline{d}_M-\phi)\log\left(\frac{K}{\beta}\right)\\
    &\geq (1-t)\log\Pi_\beta - H(t,1-t) 
          -O\left(\frac{\log n}{n}\right).
    %\phantom{=}-H(t,1-t)-O\left(\frac{\log n}{n}\right).
\end{split}
\end{equation}
To this code, we finally apply Lemma \ref{lemma:hyp_test_error2} to bound the error of second kind, and obtain
\begin{equation}\begin{split}
\lambda_2
%=W_{{x'}^n}(\mathcal{T}_{x^n}^{\tau \sqrt{n}}) 
%   &\leq 2\exp(-c\tau^2 n)
%   + 6 \exp(2n\tau) \exp(-t\beta^2 n)\\
   &\leq 2\exp(-c\tau^2 n)
   + 3 \exp(2\tau n - t\beta^2 n)\\
   &\leq 5 \exp(-c\tau^2 n),
\end{split}\end{equation}
where the last inequality is true if $-c\tau^2 \geq 2\tau - t\beta^2$. Solving the quadratic constraint we obtain
\begin{equation}
  \tau \leq \frac{-2+\sqrt{4+4ct\beta^2}}{2c}
       =    \frac{\sqrt{1+ct\beta^2}-1}{c}.
%       \leq \frac12 t\beta^2,
\end{equation}
We may assume w.l.o.g.~$ct\beta^2 \leq 1$ (see the definition of $c$, the range of $t$ and the meaningful range of $\beta$; otherwise our rate lower bound is trivial), hence by the concavity of the square root the above condition will hold for 
$\tau = \left(\sqrt{2}-1\right)t\beta^2$. 
Notice that the two error exponents in the present construction are $E_1(n) \geq c\tau^2 - \frac1n$ and $E_2(n) \geq c\tau^2 - \frac3n$.
Then, they have the form claimed in the theorem if we choose $E(n) = \frac16 ct^2\beta^4$. To conclude, we solve for $\beta$ and substitute the solution $\beta = \sqrt[4]{\frac{6 E(n)}{ct^2}}$ into Equation~\eqref{eq:coding_rate_1}.
\end{proof}

\begin{remark}\label{remark:coding}
While the lower bound in Theorem \ref{thm:coding} is well suited for small error exponents 
%(even allowing Corollary \ref{cor:coding} when they are small enough)
, it becomes trivial for bigger exponents. Clearly, for $E(n)\geq 2ct^2/3$ the packing radius $\beta\geq\sqrt{2}$ is at least the maximum possible distance between any two elements in $\sqrt{\!\widetilde{\cX}}$, so the packing number can only be 1. This results on a non-positive (trivial) lower bound on the rate in \eqref{eq:coding_rate_final}.
The reason behind this behaviour and further remarks on the regime of large error exponents can be found in Section \ref{ssec:large_exponents}. 
However, for the main part of the discussion, where the small exponent regime is the relevant one as only there we will find super-linearity, this will not be a problem.
\end{remark}

\begin{corollary}
\label{cor:coding}
Let the constants $\eta>0$ and $0<t<1$, and small enough $E(n)>0$ be given. Then, for all sufficiently large $n$, there exists a DI code with $E_1(n),\, E_2(n) \geq E(n)-\frac3n$ and the following lower bound on the rate in terms of the lower Minkowski dimension:
\begin{equation}
 \label{eq:coding_corollary}
\begin{split}
   R(n) &\geq \frac{1-t}{4} \left[\underline{d}_M\left(\!\sqrt{\!\widetilde{\cX}}\right)-\eta\right] \log\frac{ct^2}{6E(n)} \\
   &\phantom{\geq.}
     - H(t,1-t)-O\!\left(\frac{\log n}{n}\right).
\end{split}
\end{equation}
\end{corollary}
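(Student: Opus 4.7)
The plan is to deduce the corollary as an essentially mechanical substitution into Theorem \ref{thm:coding}, once the packing-number term on the right-hand side of \eqref{eq:coding_rate_final} is translated into the language of the lower Minkowski dimension via its defining liminf. The heavy lifting (the code construction, the Gilbert--Varshamov type argument in Hamming distance, the entropy partitioning, and the quadratic tradeoff $E(n) = \tfrac16 ct^2\beta^4$) has already been done inside the proof of the theorem, so here nothing new about coding needs to be argued.

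First I would apply Theorem~\ref{thm:coding} with the given $t \in (0,1)$ and sequence $E(n) > 0$. This immediately yields a code achieving error exponents $E_1(n) \geq E(n) - \tfrac{1}{n}$ and $E_2(n) \geq E(n) - \tfrac{3}{n}$ (so in particular both $\geq E(n) - \tfrac{3}{n}$) and rate
\[
   R(n) \geq (1-t)\log \Pi_{\beta}\!\left(\!\sqrt{\!\widetilde{\cX}}\right) - H(t,1-t) - O\!\left(\tfrac{\log n}{n}\right),
\]
where I set $\beta := \sqrt[4]{6E(n)/(ct^2)}$, so that
\[
   -\log \beta = \tfrac14 \log \tfrac{ct^2}{6E(n)}.
\]

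Next I would invoke the definition \eqref{eq:lowerMinkowski} of the lower Minkowski dimension applied to the set $\sqrt{\!\widetilde{\cX}} \subset S_+(|\cY|,1)$. By the liminf characterization, for every $\eta > 0$ there exists $\delta_0 > 0$ such that for every $\delta \in (0,\delta_0]$ we have $\log \Pi_\delta(\sqrt{\!\widetilde{\cX}}) \geq \bigl[\underline{d}_M(\sqrt{\!\widetilde{\cX}}) - \eta\bigr](-\log \delta)$. The phrase ``small enough $E(n)$'' in the corollary is precisely what is needed to guarantee $\beta \leq \delta_0$; any $E(n)$ with $E(n) \leq \tfrac{ct^2}{6}\delta_0^4$ works, and this can be ensured either by taking $E(n)$ a sufficiently small constant or by letting $E(n) \to 0$ slowly enough. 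Substituting this packing lower bound into the previous display, and using the expression for $-\log \beta$, gives exactly
\[
   (1-t)\log \Pi_{\beta}\!\left(\!\sqrt{\!\widetilde{\cX}}\right) \geq \tfrac{1-t}{4}\bigl[\underline{d}_M\!\left(\!\sqrt{\!\widetilde{\cX}}\right) - \eta\bigr]\log\tfrac{ct^2}{6E(n)},
\]
which plugged back into Theorem~\ref{thm:coding}'s rate bound yields \eqref{eq:coding_corollary}.

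There is no genuine obstacle: the only thing to check carefully is the quantifier order, namely that the threshold $\delta_0$ (and hence the admissible range of $E(n)$) depends only on $\eta$ and on $\sqrt{\!\widetilde{\cX}}$ but not on $n$, so that the phrase ``for all sufficiently large $n$'' in the corollary refers solely to the absorption of the additive $O(\log n / n)$ term coming from the entropy-partition step inside Theorem~\ref{thm:coding}. Everything else is bookkeeping.
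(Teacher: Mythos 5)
Your proof is correct and follows exactly the same route as the paper: apply Theorem~\ref{thm:coding}, identify $\beta = \sqrt[4]{6E(n)/(ct^2)}$, and then invoke the liminf definition of $\underline{d}_M$ to get $\log\Pi_\beta \geq (\underline{d}_M - \eta)(-\log\beta)$ for all sufficiently small $\beta$, which translates into the stated threshold on $E(n)$. Your remark on the quantifier order (that $\delta_0$ depends only on $\eta$ and the set, not on $n$) is a correct and slightly more careful bookkeeping of what the paper does implicitly.
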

\begin{proof}
By definition of the lower Minkowski dimension [Equation~\eqref{eq:lowerMinkowski}], for any $\eta>0$ there is an $E_0>0$ such that for all $E(n)\leq E_0$ we have
\begin{equation}
  \log\left[\Pi_{\sqrt[4]{\frac{6 E(n)}{ct^2}}} \left(\!\sqrt{\!\widetilde{\cX}}\right)\right]
    \geq (\underline{d}_M-\eta)\log \sqrt[4]{\frac{ct^2}{6E(n)}}.
\end{equation}
The proof is completed by direct substitution of the above expression into Equation~\eqref{eq:coding_rate_final} and taking the fourth square root out of the logarithm.
\end{proof}

While the bounds presented in this subsection might look a bit abstract, an important conclusion can be extracted easily. While in transmission the rate-reliability functions converge to the capacity value when the error exponent goes to 0, in deterministic identification it diverges! This behaviour was already hinted in the schematic Figure \ref{fig:T_vs_DI}, and it is clear looking at the bound in Corollary \ref{cor:coding}, where the term $\sim\log(1/E(n))\rightarrow\infty$ when $E(n)\rightarrow0$. Maybe a bit hidden, we of course have the same effect in Theorem \ref{thm:coding}, because the packing number of radius $\delta\propto\sqrt[4]{E(n)}\rightarrow0$ diverges to infinity. 
So, we have a lower bound that diverges for arbitrary small error exponents, meaning that we can construct a code that achieves infinite linear-scale rate. The correct way to interpret this result is that the rate is not defined in the correct scaling, a faster one is needed. Indeed, we will see in Section \ref{ssec:linear_to_super} that, in the regime of small error exponents, the rates scaling is linearithmic, and hence, if we define $R(n)=\frac{1}{n\log n}\log N$, the modified reliability function will converge to the (non-trivial) linearithmic capacity lower bound in \cite{CDBW:DI_classical} when $E(n)\rightarrow0$.

\subsection{Upper bound on the rate-reliability function (converse)}
\label{ssec:converse}
Following the structure of the previous section, we start presenting our general upper bound in Theorem \ref{thm:converse}, and analyse the regime of small error exponents in Corollary \ref{cor:converse}.
\begin{theorem}
\label{thm:converse}
For any DI code of block length $n$ with positive error exponents, $E_1(n),\, E_2(n) \geq E(n) > 0$, the rate is upper-bounded as follows:
\begin{equation}
  \label{eq:converse_rate}
  R(n) \leq \log\left[\Gamma_{\frac12{\sqrt{1-e^{-E(n)/2}}}}\left(\!\sqrt{\!\widetilde{\cX}}\right)\right].
\end{equation}
%where $E(n)=\min\{E_1(n),E_2(n)\}$. 
%Thus, the optimistic maximum rate is finite: $\bar{R}(E_1,E_2) < \infty$.
\end{theorem}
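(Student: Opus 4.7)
The plan is to bound the code size $N$ by $\Gamma^n$, where $\Gamma := \Gamma_r(\sqrt{\!\widetilde{\cX}})$ with $r = \frac{1}{2}\sqrt{1-e^{-E(n)/2}}$, by assigning each code word an $n$-tuple of indices drawn from a minimum $r$-covering of $\sqrt{\!\widetilde{\cX}}$ and then showing that distinct code words must receive distinct tuples. Once this injectivity is established, the advertised rate bound $R(n) = \frac{1}{n}\log N \leq \log \Gamma$ follows immediately.

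Concretely, first I would fix $V = \{v_1,\ldots,v_\Gamma\} \subset \sqrt{\!\widetilde{\cX}}$ realising the minimum covering, and for each code word $u_j = u_{j,1}\cdots u_{j,n}$ define $\phi(u_j) := (c_{j,1},\ldots,c_{j,n}) \in [\Gamma]^n$, where $c_{j,i}$ is an index of a nearest center in $V$ to $\sqrt{W_{u_{j,i}}}$ in Euclidean distance. Next I would extract the key distinguishability constraint from the DI conditions: for any distinct $u_j, u_k$, the inequalities $W_{u_j}(\cE_j) \geq 1-\lambda_1$ and $W_{u_k}(\cE_j) \leq \lambda_2$ give $\frac{1}{2}\|W_{u_j}-W_{u_k}\|_1 \geq 1-\lambda_1-\lambda_2$, and by the Fuchs--van de Graaf inequality \eqref{eq:Classical_FvdG} this promotes to $F(W_{u_j},W_{u_k})^2 \leq 1-(1-\lambda_1-\lambda_2)^2 \leq 2(\lambda_1+\lambda_2)$.

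To prove injectivity of $\phi$ restricted to the code, I would argue by contradiction. If $\phi(u_j) = \phi(u_k)$ for some $j \neq k$, then the triangle inequality at each position yields $|\sqrt{W_{u_{j,i}}}-\sqrt{W_{u_{k,i}}}|_2 \leq 2r$, so by \eqref{eq:fidelity->euclidean} together with the choice of $r$, $F(W_{u_{j,i}},W_{u_{k,i}})^2 \geq 1-4r^2 = e^{-E(n)/2}$. Multiplicativity of the Bhattacharyya coefficient across the memoryless product channel promotes this single-letter estimate to $F(W_{u_j},W_{u_k})^2 = \prod_{i=1}^n F(W_{u_{j,i}},W_{u_{k,i}})^2 \geq e^{-nE(n)/2}$, which together with $\lambda_1,\lambda_2 \leq e^{-nE(n)}$ contradicts the DI-derived upper bound $F^2 \leq 2(\lambda_1+\lambda_2) \leq 4e^{-nE(n)}$.

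The main obstacle is the reconciliation of constants at the very end: the two bounds on $F^2$ are strictly incompatible only when $e^{-nE(n)/2} > 4e^{-nE(n)}$, i.e.\ for $nE(n) > 4\ln 2$. Below that mild threshold the covering radius $r$ is so small that $\Gamma_r(\sqrt{\!\widetilde{\cX}})$ already dwarfs any trivial cardinality bound on $N$ (such as $|\cX|^n$), so the claimed inequality holds vacuously; a polished write-up would likely dispatch that edge regime by a separate case, or simply note that it is immaterial for the small-exponent applications developed in Corollary~\ref{cor:converse} and thereafter.
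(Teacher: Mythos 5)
Your proposal is correct and follows essentially the same route as the paper: the paper's partition $\cX^n = \dot{\bigcup}_{\ell^n}\cX_{\ell^n}$ induced by a $\delta/2$-covering of $\sqrt{\!\widetilde{\cX}}$ is exactly your map $\phi$ to tuples of covering-cell indices, the multiplicativity of the Bhattacharyya coefficient is used identically to lift the single-letter fidelity bound $F^2 \geq 1-\delta^2 = e^{-E(n)/2}$ to block length $n$, and the same Fuchs--van de Graaf step gives $F(W_{u_j},W_{u_k})^2 < 4e^{-nE(n)}$. You even correctly identify the implicit threshold $nE(n)\geq 2\ln 4$ that the paper's proof also requires (and likewise glosses over in the theorem statement), so this is the same argument dressed as an injectivity claim rather than a pigeonhole on cells.
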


\begin{proof}
We start by noticing that, given two different codewords $u_j\neq u_k$ of any DI code, their output distributions have total variation distance bounded. Indeed, we know that there exist a subset $\cE_j$, the identification test, for which the probability to accept $u_j$ is $1-\lambda_1$, and $\lambda_2$ for any other $u_{k\neq j}$. Therefore, by definition of the total variation distance as a supremum over all possible output subsets [see Equation \eqref{eq:TVD}], it is clear that 
\begin{equation}\frac12\|W_{u_j}-W_{u_k}\|_1 \geq|W_{u_j}(\cE_j)-W_{u_k}(\cE_j)|\geq 1-\lambda_1-\lambda_2.\end{equation} 
Thus,
\begin{equation}
  1-\frac12\|W_{u_j}-W_{u_k}\|_1 \leq 2e^{-E(n)n}.
\end{equation}
%where for readability we are omitting the dependence of $E$ on $n$ on the exponents in the proof steps. 
Using the relations between fidelity and total variation distance in Equation~\eqref{eq:Classical_FvdG}, we get
\begin{equation}
\label{eq:converse_maximum_F}
%\begin{split}
    F(W_{u_j},W_{u_k})^2 \leq 1-\left(1-2e^{-E(n)n}\right)^2
%    &=4e^{-E(n)n}\left(1-e^{-E(n)n}\right)\\
    < 4e^{-E(n)n}.
%\end{split}
\end{equation}
This is the maximum possible fidelity between the output distributions of any two code words of a good DI code. Now, our strategy consists in creating a covering such that elements in the same ball have a fidelity higher than the limit stipulated by Equation~\eqref{eq:converse_maximum_F}. In other words, we will ensure that elements in the same ball cannot be well-distinguishable code words, so that only one code word can be on each ball. Upper-bounding the number of code words (and therefore the rate) is then reduced to upper-bounding the cardinality of the covering. 
%As we do not know the particular size of the needed covering, we leave it as a parameter $\delta$. 

We start with an Euclidean $\frac{\delta}{2}$-covering $\cX_0$ of the square root output probability set $\sqrt{\!\widetilde{\cX}}$ which has $|\cX_0|=\Gamma_{\delta/2}\!\left(\!\sqrt{\!\widetilde{\cX}}\right) =: \Gamma_{\delta/2}$ elements,  meaning that for every $x\in\cX$ there is a $\xi\in\cX_0$ such that
\begin{equation}
  \left| \sqrt{W_x}-\sqrt{W_\xi} \right|_2 \leq \frac{\delta}{2}.
\end{equation}
Hence, our input set $\cX$ is partitioned according to the $\frac{\delta}{2}$-balls around each $\sqrt{W_\xi}$ with $\xi\in\cX_0$. That is, to each ball $\ell$ corresponds a set of the input sequences $\cX_\ell$ that generate only probability distributions in that ball. Meaning that  
%$\cX=\dot{\bigcup}_{\ell=1}^{|\cX_0|}\cX_\ell$, such that for all $\ell$ and 
for any two $x,x'\in\cX_\ell$ we have $|\sqrt{W_x}-\sqrt{W_{x'}}|_2 \leq \delta$. 
We know from Equation~\eqref{eq:fidelity->euclidean} that $1-F(W_x,W_{x'})^2\leq|\sqrt{W_x}-\sqrt{W_{x'}}|_2^2$. Then,
%for each $\ell$ and for any two $x,x'\in\cX_\ell$ we have:
\begin{equation}\label{eq:converse_fidelity_step}
F(W_x,W_{x'})^2\geq1-\delta^2.
\end{equation}
The whole input space is partitioned as $\cX=\dot{\bigcup}_{\ell=1}^{|\cX_0|}\cX_\ell$.
In block length $n$, this gives rise to the partition $\cX^n = \dot{\bigcup}_{\ell^n} \cX_{\ell^n}$, where $\cX_{\ell^n} := \cX_{\ell_1}\times\dots\times\cX_{\ell_n}$ and $\ell^n\in[|\cX_0|]^n$ is the sequence that characterizes the ball for each letter. By virtue of the multiplicativity of the fidelity under tensor products and Equation~\eqref{eq:converse_fidelity_step}, we have for all $x^n,x'^n\in\cX_{\ell^n}$, 
%(i.e.~for all input sequences which share the same ball on each letter):
\begin{equation}
  F(W_{x^n},W_{x'^n})^2 \geq (1-\delta^2)^n.
\end{equation}
If we now choose $1-\delta^2=e^{-E(n)/2}$, we find that as long as $e^{-E(n)n/2}\geq 4e^{-E(n)n}$ (i.e.~$E(n)n/2 \geq \ln 4$) the fidelity between elements of the same $\cX_{\ell^n}$ is bigger than the maximum possible fidelity that the DI code allows. In other words, there cannot be two code words in the same $\cX_{\ell^n}$. So we can upper bound the number of elements in our code by counting the cardinality of the index $[|\cX_0|]^n$: 
$N\leq |\cX_0|^n = \left(\Gamma_{\delta/2}\right)^n$ with the chosen value $\delta=\sqrt{1-e^{-E(n)/2}}$. We get
\begin{equation}
    R(n)=    \frac{1}{n}\log N 
        \leq \log\left[\Gamma_{\frac12{\sqrt{1-e^{-E(n)/2}}}}\left(\!\sqrt{\!\widetilde{\cX}}\right)\right],
        %\leq (\overline{d}_M+\epsilon) \log\left(\frac{2K}{\sqrt{1-e^{-E/2}}}\right).
\end{equation}
concluding the proof.
\end{proof}

\begin{corollary}
\label{cor:converse}
Let $\eta>0$ and a small enough $E(n)>0$ be given. Then, for all sufficiently large $n$, the rate of any DI code with error exponents $E_1(n),\,E_2(n) \geq E(n)$ can be upper-bounded in terms of the upper Minkowski dimension $\overline{d}_M$ as
\begin{equation}
  \label{eq:converse_corollary}
  R(n)\leq \frac12\left[\overline{d}_M\left(\!\sqrt{\!\widetilde{\cX}}\right)+\eta\right] \log\frac{8}{E(n)} + O\left( E(n) \right).
\end{equation}
\end{corollary}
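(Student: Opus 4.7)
The plan is to combine Theorem \ref{thm:converse} with the definition of the upper Minkowski dimension, mirroring the argument used in Corollary \ref{cor:coding}. Theorem \ref{thm:converse} gives the rate bound in terms of a covering number at radius $r(n) = \frac{1}{2}\sqrt{1 - e^{-E(n)/2}}$, so my goal is to extract the asymptotic scaling of this radius as $E(n) \to 0$ and substitute it into the standard covering-number characterisation of $\overline{d}_M$.

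First, I would Taylor-expand $1 - e^{-E(n)/2} = E(n)/2 + O(E(n)^2)$, which yields $r(n) = \sqrt{E(n)/8}\bigl(1 + O(E(n))\bigr)$ and hence $\log(1/r(n)) = \tfrac{1}{2}\log(8/E(n)) + O(E(n))$. Next, from the limit superior in Eq.~\eqref{eq:lowerMinkowski} (applied to covering numbers), for every $\eta > 0$ there is some $r_0 > 0$ such that $\log \Gamma_r\!\left(\sqrt{\!\widetilde{\cX}}\right) \leq (\overline{d}_M + \eta)\log(1/r)$ for all $r \leq r_0$; since $E(n)$ is small, one has $r(n) \leq r_0$ for all sufficiently large $n$, so this bound can be applied at $r = r(n)$.

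Chaining these two estimates through Theorem \ref{thm:converse} delivers exactly Eq.~\eqref{eq:converse_corollary}. I do not expect a real obstacle here — the argument is essentially dual to that of Corollary \ref{cor:coding}, with packing numbers replaced by covering numbers and the lower Minkowski dimension by its upper counterpart. The only bookkeeping to watch is the Taylor remainder in the expansion of $\sqrt{1 - e^{-E(n)/2}}$, but it absorbs cleanly into the additive $O(E(n))$ correction term in the claimed bound, since multiplying $\log(1 + O(E(n))) = O(E(n))$ by the constant prefactor $\overline{d}_M + \eta$ stays within $O(E(n))$.
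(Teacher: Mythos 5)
Your proposal is correct and follows essentially the same route as the paper's own proof: bound the covering number by $(\overline{d}_M+\eta)\log(1/r(n))$ using the limsup characterisation of the upper Minkowski dimension for all radii below some threshold $r_0$, Taylor-expand $r(n)=\frac12\sqrt{1-e^{-E(n)/2}}=\sqrt{E(n)/8}\,(1+O(E(n)))$ to obtain $\log(1/r(n)) = \frac12\log(8/E(n)) + O(E(n))$, and substitute into Theorem~\ref{thm:converse}. (One small slip in the citation: the limsup defining $\overline{d}_M$ is in the display line \emph{preceding} the one carrying label~\eqref{eq:lowerMinkowski}, which is actually the liminf defining $\underline{d}_M$; the quantity you actually use is the right one.)
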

\begin{proof}
%As we have chosen $\delta=\sqrt{1-e^{E/2}}$, the error exponents sequence $(E_k)$ automatically yields another one with elements $\delta_k>0$ also converging to zero. 
By the definition of the upper Minkowski dimension, for all $\eta>0$ there exists an $E_0>0$ such that for all $E(n) \leq E_0$ it holds 
\begin{equation}\label{eq:plotted_upper_bound}
\log\left[\Gamma_{\frac12\sqrt{1-e^{-E(n)/2}}}\left(\!\sqrt{\!\widetilde{\cX}}\right)\right]
\!\leq\! \left(\overline{d}_M+\eta\right) 
\log\frac{2}{\sqrt{1-e^{-E(n)/2}}}.
\end{equation}
For $E(n)$ small (close to zero) we can expand the log as follows,
%generalized Puiseux series
\begin{equation}
  \label{eq:Puiseux_series}
  \log \frac{2}{\sqrt{1-e^{-E(n)/2}}}
    \leq \frac12\log\frac{8}{E(n)} + O\left(E(n)\right).
\end{equation}
%Then, the logarithm of the covering number is
%\begin{equation}
%\log\left[\Gamma_{\frac12{\sqrt{1 -e^{E/2}}}}\left(\!\sqrt{\!\widetilde{\cX}}\right)\right]
%\leq\frac12(\overline{d}_M+\eta)\log\left(\frac{8}{E(n)}\right)+O(E).
%\end{equation}
The proof is completed by substituting the above expressions into Equation~\eqref{eq:converse_rate}. 
\end{proof}

\subsection{Improved bounds for subsets of error exponents}
\label{ssec:improved}
In Corollaries \ref{cor:coding} and \ref{cor:converse} we have lower- and upper-bounded the rate-reliability function in terms of the lower and upper Minkowski dimensions, respectively. These bounds are universal in the sense that they hold for all error exponents $E$ smaller than some threshold $E_0=E_0(\eta)$ that decreases with the arbitrary parameter $\eta>0$. 
We might however be interested only in certain (infinitely many) sufficiently small values of $E$. To capture this, we consider subsets $\cE \subset (0;+\infty)$ that have $0$ as an accumulation point (meaning that $\cE$ contains sequences converging to $0$). 

Let us start with a subset $\cE_g$ such that
\begin{equation}
  \lim_{\cE_g \ni E\rightarrow 0} \frac{\log\left[\Pi_{\sqrt[4]{6E/ct^2}} \left(\!\sqrt{\!\widetilde{\cX}}\right)\right]}{-\log\sqrt[4]{6E/ct^2}}
  = \overline{d}_M\!\left(\!\sqrt{\!\widetilde{\cX}}\right).
\end{equation}
This is a ``good'' subset of error exponents, in the sense that it implies the convergence of $(\log\Pi_\delta)/(-\log\delta)$ to the best possible value (the limsup) as $\delta\rightarrow 0$. Furthermore, due to the properties of $\cE_g$, given any $\eta>0$ and its threshold $E_0>0$, there exist values $\cE_g \ni E \leq E_0$, and for those we have
\begin{equation}
  \label{eq:improvement_coding}
  \log\left[\Pi_{\sqrt[4]{\frac{6E}{ct^2}}}\left(\!\sqrt{\!\widetilde{\cX}}\right)\right]
  \geq \frac{1}{4}(\overline{d}_M-\eta)\log\frac{ct^2}{6E}.
\end{equation}
Substituting the above expression into Equation~\eqref{eq:coding_rate_final}, we obtain an improved achievability bound in Corollary \ref{cor:coding}, but now with the upper Minkowski dimension and under the added condition that the error exponents $E(n) \in \cE_g$.

We could similarly choose a ``bad'' subset $\cE_b\subset (0;+\infty)$ such that the covering numbers converge to the lower Minkowski dimension:
\begin{equation}
  \lim_{\cE_b \ni E \rightarrow 0} \frac{\log\left[\Gamma_{\frac12\sqrt{1-e^{-E/2}}}\left(\!\sqrt{\!\widetilde{\cX}}\right)\right]}{-\log\left(\frac12\sqrt{1-e^{-E/2}}\right)}
  = \underline{d}_M\!\left(\!\sqrt{\!\widetilde{\cX}}\right).
\end{equation}
As before, for all $\eta>0$ and its associated threshold $E_0$, there exist values $\cE_b \ni E \leq E_0$, for all of which
\begin{equation}
  \label{eq:improvement_converse}
  \log\left[\Gamma_{\frac12\sqrt{1-e^{-E/2}}}\left(\!\sqrt{\!\widetilde{\cX}}\right)\right]
   \leq \frac12(\underline{d}_M+\eta)\log\frac{8}{E}+O(E),
\end{equation}
where we have followed the last steps in the proof of Corollary \ref{cor:converse}, using again the expansion in Equation~\eqref{eq:Puiseux_series}. Plugging the above expression into Equation~\eqref{eq:converse_rate}, we obtain an improved upper bound on the rate-reliability function in Corollary \ref{cor:converse}, but now with the lower Minkowski dimension and under the added condition that the error exponents $E(n) \in \cE_b$.
These improved bounds for good and bad subsets of error exponents will be realted to the optimistic and pessimistic capacity approaches respectively in the next section, specifically in \ref{ssec:linear_to_super}.

\section{Discussion: constant vs vanishing exponents, dimension zero, and large exponent regime}
\label{sec:discussion}
In this section, we demonstrate how the new reliability framework developed in this study can yield the established results in \cite{CDBW:DI_classical}. 
We begin by deriving bounds for the rate-reliability functions in the asymptotic regime ($n \rightarrow \infty$), showing how these bounds allow us to recover the main established capacity results under some particular error conditions. Next, we consider the special case where the Minkowski dimension of the output probability set is zero---a scenario where Corollaries \ref{cor:coding} and \ref{cor:converse} do not fully capture the underlying behaviour meaningfully. We illustrate an approach to handle such cases with a couple of straightforward examples. Finally, in \ref{ssec:large_exponents}, we discuss the regime of large error exponents which, as we have already hinted in \ref{remark:coding}, is not bounded meaningfully through the presented Theorems. 
%Finally, we extend our results to both classical-quantum and (fully) quantum channels, under the restriction that encoding is limited to product states. Together, these sections encompass the contents of \cite{CDBW:DI_classical} through reliability arguments, offering a new perspective on deterministic identification.

\subsection{From linear to linearithmic DI rates}
\label{ssec:linear_to_super}
The results from the preceding section tell us that the superlinear rates which, as discussed in the introduction, are a defining characteristic of general DI codes, are lost when imposing exponentially vanishing errors. Indeed, in the asymptotic limit $n\rightarrow\infty$, let $E = \min\{E_1,E_2,E_0(\eta)\} > 0$, then Corollary \ref{cor:converse} shows that for every $\eta>0$:
\begin{equation}
  \label{eq:rate-reliabiliyty-outerbound}
  \overline{R}(E_1,E_2) 
   \leq \frac12 \left(\overline{d}_M + \eta\right)
                \log\frac{8}{E} + O\left(E\right) 
   < \infty.
\end{equation}
This is because we can take the limit in Equation~\eqref{eq:converse_corollary} for every parameter $\eta>0$, as long as $E$ is below the cutoff exponent $E_0(\eta)$. 

Conversely, and similarly, for every $\eta>0$ and $0<t<1$, Corollary \ref{cor:coding} implies that in the asymptotic limit and for all $E_1,\,E_2 \leq E$:
\begin{equation}
  \label{eq:rate-reliabiliyty-innerbound}
  R(E_1,E_2) 
   \geq \frac{1}{4} (1\!-\!t)(\underline{d}_M-\eta) 
               \log\frac{ct^2}{6E} - H(t,1\!-\!t), 
\end{equation}
which is positive for sufficiently small $E>0$. Indeed, Equations \eqref{eq:rate-reliabiliyty-outerbound} and \eqref{eq:rate-reliabiliyty-innerbound} tell us that the linear-scale rate for DI is a bounded positive number under exponentially vanishing errors (the condition under which the previous Section \ref{sec:bounding} is developed) with $E>0$.

At the same time, however, we notice that the DI rates increase when we lower $E$. Actually, both Equations \eqref{eq:rate-reliabiliyty-outerbound} and \eqref{eq:rate-reliabiliyty-innerbound} diverge to infinity as $E\rightarrow 0$! Meaning that the rate scaling is faster than linear in this regime. As a matter of fact, we show next that the general linearithmic scaling of DI codes featured in \cite{CDBW:DI_classical} does reappear when we consider (slowly) vanishing error exponents.

Let $\lim_{n\rightarrow\infty} E_1(n) = \lim_{n\rightarrow\infty} E_2(n) = 0$. 
Evidently, we still need small errors $\lambda_1,\,\lambda_2 \ll 1$ as $n\rightarrow \infty$ to have a good code, so the $n E_{1,2}(n)$ must remain bounded away from $0$; more precisely, we need $E_1(n),\, E_2(n) \geq \omega(1/n)$ for the coding part and weak converse, and $E_1(n),\, E_2(n) \geq \Omega(1/n)$ for the strong converse. By direct substitution of these two ``settings'' into the results above we can recover the results from \cite{CDBW:DI_classical}. Before proving it, we point out that those results are bounds on the linearithmic capacity (the maximum asymptotic rate), which we can recover as follows in the present notation: 
\begin{equation}
  \!\!\dot{C}_\text{DI}(W) 
    =\!\! \sup_{\myfrac{E_i(n)\rightarrow 0,}{nE_i(n)\rightarrow\infty}} \!\!
      \left( \liminf_{n\rightarrow\infty} \frac{1}{\log n} \max_{\myfrac{\text{errors }\lambda_i\text{ s.t.}}{\log \lambda_i \leq -nE_i(n)}}\!\! R(n) \right),
\end{equation}
where the inner maximum is over all DI codes with errors of first and second kind bounded as $\lambda_{1} \leq 2^{-nE_{1}(n)}$ and $\lambda_{2} \leq 2^{-nE_{2}(n)}$, and the outer supremum is over functional dependencies of the exponents on $n$ going to $0$ sufficiently slowly: $\omega(1/n) \leq E_{1,2}(n) \leq o(1)$.

\begin{theorem}[{Cf.~\cite[Thm.~5.7]{CDBW:DI_classical}}]
\label{thm:recover_pessimistic}
The linearithmic DI capacity $\dot{C}_{\text{DI}}(W)$ for the channel $W:\cX\rightarrow\cY$ is bounded in terms of the lower Minkowski dimension of the set $\sqrt{\!\widetilde{\cX}}$ as follows (for slowly vanishing error exponents $\displaystyle\lim_{n\rightarrow\infty} E_i(n) = 0$):
\begin{equation}
  \frac14\underline{d}_M \!\left(\!\sqrt{\!\widetilde{\cX}}\right)
    \leq \dot{C}_{\text{DI}}(W)
    \leq \frac12\underline{d}_M\!\left(\!\sqrt{\!\widetilde{\cX}}\right).
\end{equation}
\end{theorem}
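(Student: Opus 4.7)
The plan is to derive both inequalities by substituting suitably slowly vanishing error exponents into the finite-block-length bounds of Section \ref{sec:bounding}, aligning the $\log n$ scaling of the superlinear rate with the $\log(1/E)$ factor that appears in Corollaries \ref{cor:coding} and \ref{cor:converse}.

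For the achievability bound $\dot{C}_{\text{DI}}(W) \geq \frac{1}{4}\underline{d}_M$, I would apply Corollary \ref{cor:coding} with any sequence $E(n)$ admissible in the definition, i.e., $\omega(1/n) \leq E(n) \leq o(1)$, and moreover satisfying $\log(1/E(n))/\log n \to 1$; the choice $E(n) = (\log n)/n$ works. Substituted into \eqref{eq:coding_corollary}, the dominant term of $R(n)$ is $\frac{1-t}{4}(\underline{d}_M - \eta)\log n$, with all other contributions of order $o(\log n)$. Dividing by $\log n$, taking $\liminf_{n\to\infty}$, and letting first $\eta \to 0$ and then $t \to 0^{+}$, we obtain $\liminf_n R(n)/\log n \geq \frac{1}{4}\underline{d}_M$. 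Since this particular choice of $E_i(n)$ is admissible, it witnesses the supremum in the definition of $\dot{C}_{\text{DI}}(W)$, establishing the lower bound.

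For the upper bound $\dot{C}_{\text{DI}}(W) \leq \frac{1}{2}\underline{d}_M$, a direct application of Corollary \ref{cor:converse} would only yield the weaker $\frac{1}{2}\overline{d}_M$, so a sharper tool is required. The plan is to invoke the refined bound \eqref{eq:improvement_converse} for the bad subset $\cE_b$ together with the monotonicity of the covering number in its radius: for any $E^{*} \in \cE_b$ with $E^{*} \leq E(n)$, Theorem \ref{thm:converse} combined with \eqref{eq:improvement_converse} yields $R(n) \leq \log\Gamma_{\frac{1}{2}\sqrt{1-e^{-E^{*}/2}}}\!\left(\!\sqrt{\!\widetilde{\cX}}\right) \leq \frac{1}{2}(\underline{d}_M + \eta)\log(8/E^{*}) + O(E^{*})$. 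Given any admissible $E_i(n) \to 0$, I would extract a subsequence $n_k$ together with values $E_k^{*} \in \cE_b$ satisfying $E_k^{*} \leq E(n_k)$ and $\log(1/E_k^{*})/\log n_k \to 1$, exploiting the accumulation of $\cE_b$ at $0$. Along this subsequence one obtains $R(n_k)/\log n_k \leq \frac{1}{2}(\underline{d}_M + \eta) + o(1)$, and consequently $\liminf_n R(n)/\log n \leq \frac{1}{2}(\underline{d}_M + \eta)$; letting $\eta \to 0$ and taking the supremum over $E_i(n)$ closes the bound.

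The main obstacle will be this second step, specifically the extraction of the subsequence $(n_k, E_k^{*})$: the bad subset $\cE_b$ is determined by the channel while the exponents $E(n)$ are governed by the code, so one must ensure $\cE_b$ supplies values close to but not much below $E(n_k)$ so that $\log(1/E_k^{*})/\log n_k$ is asymptotically $1$ rather than growing without bound. This relies in an essential way on $\cE_b$ being constructed to realize the $\liminf$ in the definition of $\underline{d}_M$, together with a density/accumulation argument at $0$ that interlaces the code's exponent sequence with the channel-dependent bad radii.
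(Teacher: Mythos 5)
Your proposal follows the paper's own route essentially step for step: plug $E(n)=(\log n)/n$ into Corollary~\ref{cor:coding} with $t,\eta\to 0$ for the achievability bound, and for the converse invoke the ``bad'' subset $\cE_b$ and the improved covering bound~\eqref{eq:improvement_converse} rather than Corollary~\ref{cor:converse}, since the latter only yields $\tfrac12\overline{d}_M$. The one place you go beyond the paper's exposition is where you insist on extracting a subsequence $(n_k,E_k^\ast)$; the paper simply writes ``take $\cE_b\ni E_1(n)=E_2(n)\geq C/n$'' and passes to the $\liminf$, leaving implicit how one reconciles a bound proved only along $\cE_b$ with the supremum over \emph{all} admissible exponent sequences. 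Your instinct that this needs justification is right, and your sketched construction is in fact easy to make rigorous: pick $E_k^\ast\in\cE_b$ decreasing to $0$ and set $n_k:=\lceil 1/E_k^\ast\rceil$; then $\log(1/E_k^\ast)/\log n_k\to 1$ automatically, and $E_k^\ast\leq E(n_k)$ follows from $n\,E(n)\to\infty$ since $n_k E_k^\ast\approx 1$. Combined with Theorem~\ref{thm:converse} and monotonicity of the covering number in the radius (a smaller exponent gives a weaker but still valid upper bound on the same code), this yields $R(n_k)/\log n_k\leq\tfrac12(\underline{d}_M+\eta)+o(1)$ along $n_k$, hence bounds the $\liminf$. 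So your worry in the final paragraph is somewhat overstated --- the interlacing does not require any delicate density argument because the channel-dependent $\cE_b$-values can be taken $\Theta(1/n_k)$ while the code's constraint $E(n_k)=\omega(1/n_k)$ places them safely below --- but the step itself is a genuine part of a complete proof, and your version of the argument is the more self-contained one.
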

\begin{proof}
We start with the coding part taking $E_1(n) = E_2(n) = \frac{\log n}{n}$ in Equation~\eqref{eq:coding_corollary}, from Corollary \ref{cor:coding}:
\begin{equation}\begin{split}
R(n) 
    &\geq \frac{1-t}{4} \left[\underline{d}_M \left(\!\sqrt{\!\widetilde{\cX}}\right)-\eta\right] \log\frac{ct^2n}{6\log n}\\
    &\phantom{\geq.}-H(t,1\!-\!t)-O\!\left(\frac{\log n}{n}\right).
\end{split}\end{equation}
We can maximize this rate choosing the parameter $t=t(n)>0$ going to zero slowly enough for increasing $n$, such that $\lim_{n\rightarrow\infty} \frac{\log(t^2n/\log n)}{\log n} = 1$ (we can for example choose $t=1/\log n$), and $\eta=\eta(n)>0$ also going to zero as $n\rightarrow\infty$. Then, when taking the limit of large $n$ we find
\begin{equation}
  \dot{C}_{\text{DI}}(W)
    \geq \liminf_{n\rightarrow\infty} \frac{1}{\log n} R(n)
    \geq \frac14 \underline{d}_M\left(\!\sqrt{\!\widetilde{\cX}}\right),
\end{equation}
completing the direct part.

Similarly, for the upper bound we need to take $\cE_b\ni E_1(n) = E_2(n) \geq \frac{C}{n}$ in Equation~\eqref{eq:improvement_converse}, the (pessimistic) upper bound for bad sequences of errors. Choosing also $\eta=\eta(n)$ going slowly enough towards $0$ for increasing $n$, such that the error exponents, which are upper bounded by the threshold $E_0\!\left(\eta(n)\right)$, can also go to $0$ slowly, we find
\begin{equation}\begin{split}
  \dot{C}_{\text{DI}}(W)
    &\leq \liminf_{n\rightarrow\infty} \frac{1}{2\log n} \underline{d}_M\left(\!\sqrt{\!\widetilde{\cX}}\right) \log\frac{8n}{C} + O\left(\frac1n\right)\\
    &= \frac12 \underline{d}_M\left(\!\sqrt{\!\widetilde{\cX}}\right),
\end{split}\end{equation}
and we are done.
\end{proof}

The capacity $\dot{C}_{\text{DI}}$ has to be approach by all sufficiently large values of $n$, i.e. we take the lowest convergence value of the limit (the inferior limit), this is why it is commonly referred to as a \textit{pessimistic capacity}. It might be possible to find a subsequence of block lengths $n_k\rightarrow\infty$ for which the rate converges to a higher number. This leads to the definition of the \textit{optimistic capacity}, which provides information about the best achievable rates (though only for specific block lengths) through the superior limit:
\begin{equation}
 \! \!\dot{C}_{\text{DI}}^\text{opt}(W) 
  = \!\!\!\sup_{\myfrac{E_i(n)\rightarrow 0,}{nE_i(n)\rightarrow\infty}} \!\!\!
      \left( \limsup_{n\rightarrow\infty} \frac{1}{\log n} \max_{\myfrac{\text{errors }\lambda_i\text{ s.t.}}{\log \lambda_i \leq -nE_i(n)}}\!\!\! R(n) \right),
\end{equation}
with the maximum inside and the supremum outside ranging over the same objects as in the formula for the pessimistic capacity. 

We refer the reader to \cite[Sect.~7.1]{CDBW:DI_classical} for an extensive discussion on the origins, comparison, and interpretation of the optimistic and pessimistic capacities, or else \cite{Ahlswede2006} which inspired it.
We can recover the general superexponential optimistic capacity bounds in \cite{CDBW:DI_classical} with similar methods as in the previous theorem.

\begin{theorem}[{Cf.~\cite[Thm.~5.9]{CDBW:DI_classical}}]\label{thm:recover_optimistic}
The optimistic linearithmic DI capacity $\dot{C}_{\text{DI}}^{\text{opt}}(W)$ of a channel $W:\cX\rightarrow\cY$ is bounded in terms of the upper Minkowski dimension of $\sqrt{\!\widetilde{\cX}}$ as follows (for slowly vanishing error exponents $\displaystyle\lim_{n\rightarrow\infty} E_i(n) = 0$):
\begin{equation}
  \frac14\overline{d}_M\! \left(\!\sqrt{\!\widetilde{\cX}}\right)
    \leq \dot{C}_{\text{DI}}^{\text{opt}}(W) 
    \leq \frac12\overline{d}_M\!\left(\!\sqrt{\!\widetilde{\cX}}\right).
\end{equation}
\end{theorem}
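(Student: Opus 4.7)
The overall approach mirrors the proof of Theorem \ref{thm:recover_pessimistic}, with three modifications: replace $\liminf$ by $\limsup$, swap the lower Minkowski dimension for its upper counterpart, and for the achievability step invoke the improved coding bound from Subsection \ref{ssec:improved} [Eq.~\eqref{eq:improvement_coding}] along a ``good'' subsequence of error exponents, rather than the universal Corollary \ref{cor:coding}. The converse is essentially unchanged, since Corollary \ref{cor:converse} already yields the upper Minkowski dimension directly.

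For the converse $\dot{C}_{\text{DI}}^{\text{opt}}(W) \leq \frac{1}{2}\overline{d}_M$, I would fix any admissible functional dependence with $E_i(n)\to 0$ and $nE_i(n)\to\infty$, set $E(n):=\min\{E_1(n),E_2(n)\}$, and invoke Corollary \ref{cor:converse} directly. For every $\eta>0$ this yields, for all sufficiently large $n$,
\[
\frac{R(n)}{\log n} \leq \frac{1}{2}(\overline{d}_M+\eta)\,\frac{\log(8/E(n))}{\log n} + o(1).
\]
The condition $nE(n)\to\infty$ forces $\log(1/E(n))\leq \log n + o(\log n)$, so the factor multiplying $(\overline{d}_M+\eta)/2$ is at most $1+o(1)$. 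Taking $\limsup_n$, then $\eta\to 0$, and finally the supremum over admissible dependences bounds $\dot{C}_{\text{DI}}^{\text{opt}}(W)$ above by $\frac{1}{2}\overline{d}_M$.

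For the direct part $\dot{C}_{\text{DI}}^{\text{opt}}(W) \geq \frac{1}{4}\overline{d}_M$, I would pick a sequence $\cE_g\ni E_k\to 0$ from a good subset as in Subsection \ref{ssec:improved}. To each $E_k$ I assign a block length $n_k$ chosen so that $n_kE_k\to\infty$ while $\log n_k/\log(1/E_k)\to 1$; a concrete choice is $n_k:=\lceil E_k^{-1}\log(1/E_k)\rceil$. I then define a functional dependence $E(n)$ with $E(n_k)=E_k$ and, say, $E(n):=\log n/n$ off the subsequence, so that admissibility is satisfied. On the subsequence $n_k$, Theorem \ref{thm:coding} combined with \eqref{eq:improvement_coding} (valid since $E_k\in\cE_g$ and eventually $E_k\leq E_0(\eta)$) yields
\[
\frac{R(n_k)}{\log n_k} \geq \frac{1-t}{4}(\overline{d}_M-\eta)\,\frac{\log(ct^2/(6E_k))}{\log n_k} - o(1).
\]
Letting $t=t(n)$ and $\eta=\eta(n)$ decay to $0$ slowly and using $\log(1/E_k)/\log n_k\to 1$, the right-hand side tends to $\frac{1}{4}\overline{d}_M$ along $n_k$; since $\limsup_n$ dominates any subsequential limit, this transfers to $\dot{C}_{\text{DI}}^{\text{opt}}(W)$.

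The main obstacle is the careful construction of $E(n)$ in the direct part: $\cE_g$ is only guaranteed to accumulate at $0$, so we cannot prescribe $E(n)$ as a closed-form function along the subsequence. Admissibility requires $E(n)\to 0$ with $nE(n)\to\infty$, and simultaneously $\log(1/E(n_k))\sim\log n_k$ on the subsequence where the improved packing bound kicks in. Matching each $E_k\in\cE_g$ to a block length of exactly the right growth rate, and interpolating sensibly in between, is the delicate bookkeeping the proof must handle; everything else is a direct adaptation of the pessimistic argument.
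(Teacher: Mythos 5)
Your proposal is correct and follows essentially the same high-level path as the paper's proof: reduce the direct part to the improved packing bound of Eq.~\eqref{eq:improvement_coding} along a ``good'' subset $\cE_g$, reduce the converse to Corollary~\ref{cor:converse}, and let $t(n)$ and $\eta(n)$ decay slowly so the dominant term in $\log(ct^2/6E)/\log n$ (respectively $\log(8/E)/\log n$) tends to $1$. The one point where you add genuine value is the bookkeeping you flag at the end: the paper's proof simply writes ``take $\cE_g\ni E_1(n),E_2(n)\geq C/n$,'' which is both vague (no explicit assignment of block lengths to the good exponents) and arguably points the inequality the wrong way for a direct bound -- a lower bound on $\log(ct^2/6E(n))$ needs $E(n)$ to be \emph{small} relative to $1/n$ up to polylog factors, not merely bounded below. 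Your construction $n_k:=\lceil E_k^{-1}\log(1/E_k)\rceil$ with a tame interpolation $E(n)=\log n/n$ off the subsequence makes this precise: it guarantees admissibility ($E(n)\to 0$, $nE(n)\to\infty$) and $\log(1/E_k)/\log n_k\to 1$ simultaneously. Similarly, your converse is cleaner because it runs over \emph{every} admissible functional dependence $E(n)$, using only $nE(n)\to\infty\Rightarrow\log(1/E(n))\leq\log n+o(\log n)$, whereas the paper plugs in the single choice $E(n)=\log n/n$ and leaves the monotonicity-in-$E$ argument implicit. So: same method, but your version closes the small gaps the paper glosses over.

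One caveat that applies equally to your proof and the paper's: the set $\cE_g$ is defined in Subsection~\ref{ssec:improved} with a fixed $t$, while in the proof $t=t(n)\to 0$; strictly speaking one should fix a good sequence of packing radii $\delta_k\to 0$ first and then back out $E_k$ and $t_k$ from $\delta_k=\sqrt[4]{6E_k/ct_k^2}$, checking that the slowly varying $t_k$ does not distort the limit $\log\Pi_{\delta_k}/(-\log\delta_k)\to\overline{d}_M$. Since $t_k$ decays subpolynomially in $\delta_k$ this works out, but neither you nor the paper spells it out. It is not a gap relative to the paper, just a shared implicit step.
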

\begin{proof}
We follow the same steps as in the previous proof. For the direct part we need to take $\cE_g\ni E_1(n),\, E_2(n) \geq \frac{C}{n}$ in the (optimistic) lower bound for good sequences of error exponents [Equation~\eqref{eq:improvement_coding}], and similarly choose $t=t(n)$ and $\eta=\eta(n)$ going to $0$ simultaneously and sufficiently slowly as $n\rightarrow\infty$. Then,
\begin{align}
  \dot{C}_{\text{DI}}^{\text{opt}}(W)
    &\geq \limsup_{n\rightarrow\infty} \frac{1}{4\log n} \left[\overline{d}_M\!\left(\!\sqrt{\!\widetilde{\cX}}\right)-\eta(n)\right]\log\frac{ct^2n}{6C}\nonumber\\
    &= \frac14\overline{d}_M\! \left(\!\sqrt{\!\widetilde{\cX}}\right).
\end{align} 

For the converse can we take $E_1(n) = E_2(n) = \frac{\log n}{n}$ in Equation~\eqref{eq:converse_corollary} and $\eta=\eta(n)$ going to zero slowly enough, and find
\begin{align}
  \dot{C}_{\text{DI}}^{\text{opt}}(W)
    &\leq \limsup_{n\rightarrow\infty} \frac{1}{2\log n} \left[\overline{d}_M\!\left(\!\sqrt{\!\widetilde{\cX}}\right)+\eta(n)\right] \log\frac{8n}{\log n}\nonumber\\
    &\phantom{\leq.}+ O\left( \frac1n \right) = \frac12\overline{d}_M\! \left(\!\sqrt{\!\widetilde{\cX}}\right),
\end{align}
completing the proof.
\end{proof}

In conclusion, while for exponentially fast vanishing errors the superlinear nature of DI is lost, when imposing a very slow decrease of the errors in block length we can recover the linearithmic capacity bounds in \cite{CDBW:DI_classical} from the rate-reliability bounds derived in Section \ref{sec:bounding}. This shows that, despite the slightly superlinear nature of the optimum rates being a general feature of DI, it is also very sensitive to the particular error regimes. 
The convergence of the rate bounds towards the capacity is also very slow, as shown in Figure~\ref{Fig:finite_n}.
%\aw{How were these plots made, and using which bounds?}

\begin{figure}[ht]
    \centering
    \includegraphics[width=0.95\linewidth]{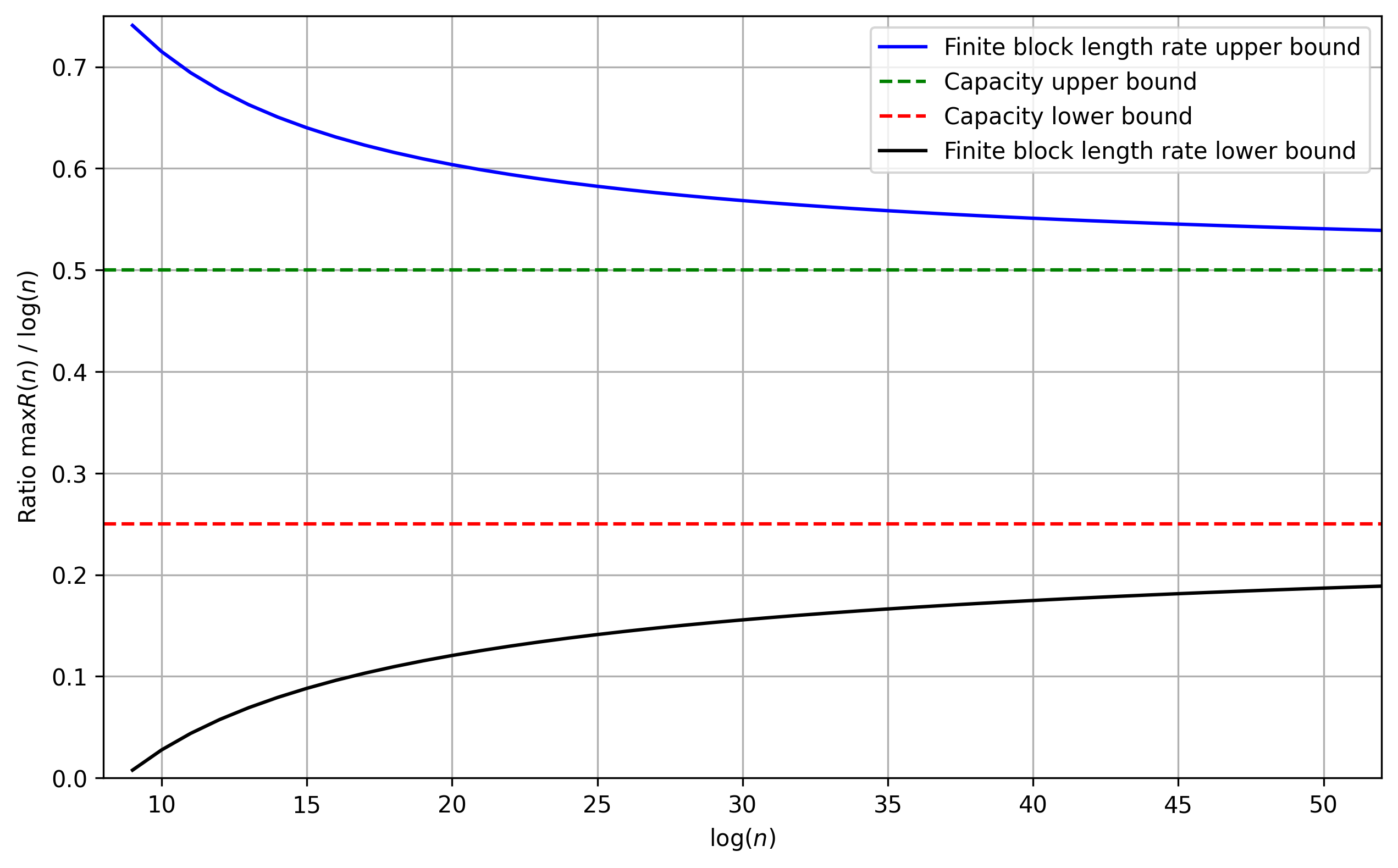}
    \caption{\small Tendency of the upper and lower rate bounds towards the capacity upper and lower bounds for increasing block length, in an example case with $d_M=1$ and $E_1=E_2=E(n)= 1/n$. The upper bound (blue line) is a plot of Eq.~\eqref{eq:plotted_upper_bound} with $\eta(n)=1/(\log n)$, and the lower bound (black line) a plot of Eq.~\eqref{eq:coding_corollary} with $t(n)^2={3}/{(c\log n)}$ and $\eta(n)=1/n$, following the conditions described in the proofs of Theorems \ref{thm:recover_pessimistic} and \ref{thm:recover_optimistic} above.}
    \label{Fig:finite_n}
\end{figure}

%\aw{The last paragraph needs to be expanded and the substitutions made explicit; also the final results should be stated, matching the theorems from \cite{CDBW:DI_classical}. The above was short in the ICC version but there is no reason for such brevity here.}
%\pau{What do you think about this style? Showing that we recover the results by proving the theorems through the tools and results of the present paper. There are still some things that should be brushed in this version, like the asymptotic limits on the improved bounds which work for good and bad sequences of errors, which should be explicit on the proof.}

\subsection{Zero-dimensional output probability sets}
\label{ssec:d=0}
%\aw{Caution! We do not actually define a non-asymptotic rate-reliability function. I hope I have explained sufficiently already why that is a bad idea, especially for $n$-dependent $E_1$ and $E_2$. For the following I recommend we consider $E(n) \equiv E$ constant, and write $\max R(n)$ (maximum over the codes with prescribed errors) rather than $R(n)$, which is a function of the code.}

In Corollaries \ref{cor:coding} and \ref{cor:converse}, and in Section~\ref{ssec:improved}, we have demonstrated how to bound the rate $R(n)$ of a code with small error exponents using the lower and upper Minkowski dimensions featured in the main pessimistic and optimistic capacity results of \cite{CDBW:DI_classical}. 
%While in that previous work the Minkowski dimensions appeared naturally, we observe now that they only manifest given small enough error exponents. As a matter of fact, in the last Section \ref{ssec:linear_to_super}, we have seen that the old capacity results from \cite{CDBW:DI_classical} can only be recovered for smallest possible error exponents such that the errors vanish for increasing $n$. 
This could cause some confusion on how to approach channels for which the output probability set (and, equivalently, its square-root) have Minkowski dimension $d_M\!\left(\!\sqrt{\!\widetilde{\cX}}\right)=0$, particularly in the regime of small error exponents, where the aforementioned corollaries apply. Indeed, in such a case, the lower bound given in Corollary \ref{cor:coding} becomes trivial, and the upper bound only tells us that $R(n) \leq o\left(\log\frac{1}{E(n)}\right)$ for small $E(n)\leq E_1(n),\,E_2(n)$. 
However, we can obtain much more meaningful information if we go back to Theorems \ref{thm:coding} and \ref{thm:converse} (for the lower and upper bounds respectively) and impose the regime of small errors only at an appropriate point.

Let us demonstrate this through two examples. To state the first, we recall the definition of the \emph{Bernoulli channel} $B:[0;1] \rightarrow \{0,1\}$ on input $x\in[0;1]$ (a real number), which outputs a binary variable distributed according to the Bernoulli distribution $B_x$ with parameter $x$:
\begin{equation}
    \label{eq:BernoulliChannel}
     %B(y|x) = 
     B_x(y) = xy + (1-x)(1-y)
      = \begin{cases}
          x   & \text{for } y=1, \\
          1-x & \text{for } y=0.
        \end{cases}
\end{equation}

\begin{example}[{Cf.~\cite[Subsect.~5.4]{CDBW:DI_classical}}]
\label{ex:Bernoulli}
For a real number $a>1$, let $\cX_a:=\{0\}\cup\{a^{-k}:0\leq k\in\mathds{N}_0\}$ be a set of numbers in the interval $[0,1]$. 
Then, the maximum rate $R(n)$ of a DI code for $B\vert_{\cX_a}$, the Bernoulli channel restricted to inputs from $\cX_a$, for small and equal error exponents $E(n):=E_1(n)=E_2(n)>0$, is bounded as follows:
\begin{equation}
  \log\log \frac{1}{E(n)} - O(1)
    \leq \max R(n)
    \leq \log\log\frac{1}{E(n)} + O(1).
%\begin{equation}\begin{split}
%    &R(n) \geq (1-t)\log\log_a\left[\frac{t\sqrt{c}(\sqrt{a}-1)^2}{36\sqrt{6E(n)}}\right]-H(t,1-t)-O\left(\frac{\log n}{n}\right),\\
%    &R(n)
%    \leq \log\log_a\left[\frac{2a^2}{E(n)}\right]+O\left(\frac{E(n)}{-\log E(n)}\right).
%    \end{split}\end{equation}
%    \end{comment}
\end{equation}
The capacity of this channel in the suitably defined $n\log\log n$ scale is thus $\mathring{C}_\text{DI}(\cB|_{\cX_a})=1$.
\end{example}

\begin{proof}
As the set of all probability distributions with binary output $\cP(\{0,1\})$ (with total variation distance) is isometric to the interval $[0;1]$ (with the usual metric on real numbers), the covering (or packing) of the output probability set of a channel with binary output, like $B|_{\cX_a}$, corresponds to the covering (or packing) of its input, which is $\cX_a$. 
It is not hard to see that
\begin{equation}
  \label{eq:example_covering_bounds}
  \log_a\frac{a-1}{3\epsilon} 
    \leq \Gamma_{\epsilon}(\cX_a) 
    \leq \log_a\frac{a^2}{2\epsilon}.
\end{equation}
Indeed, the upper bound results from placing an interval of length $2\epsilon$ at $[0;2\epsilon]$, which covers all $k$ such that $a^{-k}\leq 2\epsilon$, plus a separate interval for each smaller $k$. The lower bound comes from the realization that if the gap between two consecutive points in $\cX_a$ is bigger than $2\epsilon$, then they cannot be covered by the same interval. Say, $2\epsilon < 3\epsilon \leq a^{1-k}-a^{-k} = (a-1)a^{-k}$, which is true for $k\leq \log\frac{a-1}{3\epsilon}$. This means that each such $a^{-k}$ needs its own interval. 

To prove the lower bound, we start from Theorem \ref{thm:coding}, we use the covering/packing relation $\Pi_{\epsilon}(F)\geq\Gamma_{2\epsilon}(F)$, and apply the lower bound in Equation~\eqref{eq:example_covering_bounds}, noting that in this example $\sqrt{\cX_a}=\cX_{\sqrt{a}}$.
After some basic algebra we get:
\begin{equation}\begin{split}
  R(n) 
   &\geq (1-t)\log\log_a \frac{t\sqrt{c}(\sqrt{a}-1)^2}{36\sqrt{6E(n)}}\\ 
   &\phantom{\geq.}- H(t,1-t) - O\left(\frac{\log n}{n}\right).
\end{split}\end{equation}
To optimise the right-hand side, choose $t=\sqrt[4]{E(n)}$ such that
\begin{equation}\begin{split}
\!\!R(n) &\geq \left(1-\sqrt[4]{E(n)}\right) \log \log_a \frac{O(1)}{\sqrt[4]{E(n)}}\\
&\phantom{==}- H\left(\sqrt[4]{E(n)},1-\sqrt[4]{E(n)}\right) -O\left(\frac{\log n}{n}\right)\\
&=\log\log\frac{1}{E(n)} - O(1).
\end{split}\end{equation}

For the upper bound we start from Theorem \ref{thm:converse} and use the right-hand inequality in Equation~\eqref{eq:example_covering_bounds} and that $\sqrt{\cX_a}=\cX_{\sqrt{a}}$:
\begin{equation}\begin{split}
  R(n) &\leq \log\left[\Gamma_{\frac12{\sqrt{1-e^{-E(n)/2}}}}\left(\!\sqrt{\!\widetilde{\cX}}\right)\right]\\
  &\leq\log \log_{\sqrt{a}}\left(\frac{a}{\sqrt{1-e^{-E(n)/2}}}\right).
\end{split}\end{equation}
For small error exponents $E(n)>0$ we can expand the expression above as follows:
\begin{equation}\begin{split}
  R(n)
%\leq\log\left[\log_a\left(\frac{2a^2}{E(n)}\right)+O\left(E(n)\right)\right]
  &\leq \log\log_a \frac{2a^2}{E(n)} + O\left(\frac{E(n)}{-\log E(n)}\right)\\
  &=\log\log\frac{1}{E(n)} +O(1).
\end{split}\end{equation}

Considering the bounds above and the arguments in Section \ref{ssec:linear_to_super}, where we took $E(n)=\omega(1/n)$ for the lower bound and $E(n)=\Omega(1/n)$ for the upper, we conclude that the DI capacity in this case ought be defined at $n\log\log n$ scale (cf.~\cite[Subsect.~5.4]{CDBW:DI_classical}), and that it evaluates to 
\begin{equation}
  \mathring{C}_\text{DI}(\cB|_{\cX_a}) 
    \!:=\!\!\!\! \sup_{\myfrac{E_i(n)\rightarrow 0,}{nE_i(n)\rightarrow\infty}} \!\!\!\!
      \left( \liminf_{n\rightarrow\infty}\frac{1}{\log\log n} \!\!\!\!\!\!\max_{\myfrac{\text{errors }\lambda_i\text{ s.t.}}{\phantom{==}\log \lambda_i \leq -nE_i(n)}} \!\!\!\!\!\!\!\! R(n) \right)\!=\! 1,
\end{equation}
completing the proof.
\end{proof}

\begin{example}[Cf.~\cite{SPBD:DI_power}]
\label{ex:DMC}
Given small enough $E(n)>0$, for all sufficiently large $n$ there exists a DI code with $E_1(n),\,E_2(n) \geq E(n)-\frac3n$ for a DMC $W:\cX\rightarrow\cY$ with $|W(\cX)|$ pairwise different output probability distributions (equivalently: $N_{\text{row}}=|W(\cX)|$ distinct rows of the channel stochastic matrix) and rate lower-bounded
%\pau{Change the statement:} For any pair of sufficiently small error exponents $E_1(n),\,E_2(n)>0$, the maximum rate $R(n)$ of a DI code for the DMC $W$ with $|W(\cX)|$ pairwise different output probability distributions (equivalently: $N_{\text{row}}=|W(\cX)|$ distinct rows of the channel stochastic matrix), is bounded by:
\begin{equation}
  \label{eq:DMC-direct}
  R(n)\! \geq\! \log |W(\cX)| - O\left(-\sqrt{E(n)}\log E(n)\right) - O\left(\frac{\log n}{n}\right). 
\end{equation}
Conversely, for any DI code with $E_1(n),\,E_2(n) \geq E(n) > 0$ and sufficiently large $n$,
\begin{equation}\label{eq:DMC-converse}
  R(n) \leq \log |W(\cX)| - O\Bigl(-E(n)\log E(n)\Bigr).
\end{equation}
The linear capacity results for the DMC from \cite{AC:DI,SPBD:DI_power} are recovered from these two bounds in the limit of vanishing exponents: $C_{\text{DI}}(W) = \log |W(\cX)|$.
\end{example}

\begin{proof}
For the direct part [Equation~\eqref{eq:DMC-direct}] we can reuse the proof of Theorem \ref{thm:coding} with slight modifications. There, we started by defining a packing on $\sqrt{\!\cX}$ of radius $\beta$, and worked after with the discrete set of $|\cX_0|$ packing elements. Since now we already have a finite set of output probability distributions $\widetilde{\cX} = W(\cX)$, we let 
\begin{equation}
2\beta 
:= \min_{x\neq x'\in\cX}\left|\sqrt{W_x}-\sqrt{W_{x'}}\right|_2,
\end{equation} 
the minimum Euclidean distance between square roots of output probability distributions, and the rest of the proof is recycled. We will create a maximum-size code in block length $n$ with minimum Hamming distance between different code words $d_H(x^n,x'^n)\geq tn$ for $0<t<1$, and partition the code into regions of similar entropy. Following the arguments, we can lower-bound the rate as [cf.~Equation~\eqref{eq:coding_rate_1}]:
\begin{equation}
 R(n) \geq (1-t)\log|W(\cX)| - H(t,1-t) - O\left(\frac{\log n}{n}\right),
\end{equation}
and the error exponents are as claimed in the theorem by requiring $E(n)=\frac16 ct^2\beta^4$. Solving for $t$ and substituting in the equation above we get that, for small error exponents:
\begin{equation}
  R(n) \geq \log|W(\cX)| - O\left(-\sqrt{E(n)}\log E(n)\right) - O\left(\frac{\log n}{n}\right).
\end{equation}

For the converse [Equation~\eqref{eq:DMC-converse}] we start by noticing that Equation~\eqref{eq:converse_maximum_F} translates to a minimum Hamming distance between code words. 
Namely, letting $\frac{1}{\alpha} := \displaystyle \max_{x\neq x'\in\cX} F(W_x,W_{x'}) < 1$ be the maximum fidelity between different output probability distributions, we have for two distinct code words $u_j = x^n$ and $u_k = {x'}^n$:
\begin{equation}
  \frac{1}{\alpha^{d_H(u_j,u_k)}}\leq 4e^{-nE(n)},
\end{equation}
which implies
\(
  d_H(u_j,u_k) \geq nE(n)\log_\alpha e - \log_\alpha 4.
\)
With the sphere packing (aka Hamming) bound and in the regime of small error exponents we obtain
\begin{equation}
  R(n) \leq \log |W(\cX)| - O\left(-E(n)\log E(n)\right).
\end{equation}

The only thing left to show is that in the asymptotic limit, the rate bounds above yield the linear capacity results from \cite{SPBD:DI_power}. Repeating the argument in Section \ref{ssec:linear_to_super} we take $E(n)\geq\Omega(1/n)$ for the upper bound and $E(n)\geq\omega(1/n)$ for the lower, which translates into:
\begin{equation}\begin{split}
\max R(n) &\geq \log |W(\cX)| - O\left(\frac{1}{\sqrt{n}}\log n\right)\\
\max R(n) &\leq \log |W(\cX)| - O\left(\frac1n\log n\right).
\end{split}\end{equation}
As both the $O(\,\cdot\,)$ terms go to zero in the asymptotic limit, the linear capacity is given by
\begin{equation}
C_\text{DI}(W) 
    = \!\!\!\sup_{\myfrac{E_i(n)\rightarrow 0,}{nE_i(n)\rightarrow\infty}} \!
      \left( \liminf_{n\rightarrow\infty} \!\!\!\!\!\!\!\max_{\myfrac{\text{errors }\lambda_i\text{ s.t.}}{\phantom{==}\log \lambda_i \leq -nE_i(n)}}\!\!\!\!\!\!\!\! R(n) \right)=\log |W(\cX)|,
\end{equation}
and the proof is completed.
\end{proof}

Example \ref{ex:DMC} can be generalized to DMCs with an input power constraint. Given a cost function $\phi : \cX \rightarrow \RR_{\geq 0}$ (power per symbol), an input power constraint $A$ is an upper bound on the average power that admissible code words have to obey:
%\begin{equation}
%  \label{eq:power_constraint}
$\phi^n(x^n)=\frac1n\sum_{t=1}^n \phi(x_t)\leq A$.
%\end{equation}
From \cite{SPBD:DI_power} we know that the DI capacity of the DMC $W$ under the power constraint $A$ is 
\begin{equation}
  C_\text{DI}(W;\phi,A) = \max_{p_X\,:\,\EE \phi(X) \leq A} H(X),
\end{equation}
after purging the channel of duplicate input symbols: if $W_x=W_{x'}$, and say $\phi(x') \geq \phi(x)$, remove the input with the larger cost ($x'$) from $\cX$ and keep the other one ($x$); in case of a tie decide arbitrarily. 
Example \ref{ex:DMC} above is recovered when the power constraint is irrelevant (if $A \geq \max_{x\in\cX} \phi(x)$), in which case the capacity becomes $\max_{p_X\in\cP(\cX)} H(X) = \log|\cX| = \log|W(\cX)|$. With the power constraint active, and for positive error exponents, we obtain lower and upper bounds on $R(n)$ as given in Equations~\eqref{eq:DMC-direct} and \eqref{eq:DMC-converse}, with $\log|W(\cX)|$ replaced by $\max_{p_X\,:\,\EE \phi(X) \leq A} H(X)$.

Despite not having a closed formula for the general case with Minkowski dimension zero, we show how to tackle such cases through the manipulation of Theorems \ref{thm:coding} and \ref{thm:converse}. It is possible that calculating the needed covering and packing numbers is difficult in some cases, but regardless, it is clear that we will find the rate-reliability function bounded in terms of some (sublinearly growing) function of $\log\frac{1}{E}$.

\subsection{Large error exponent regime}
\label{ssec:large_exponents}
The bounds in Section \ref{sec:bounding} have permitted an extensive analysis of the trade-off between rate and reliability in the regime of small error exponents. But, as already mentioned in Remark \ref{remark:coding}, our lower bound on the rate [Equation \eqref{eq:coding_rate_final}] does not behave well for bigger error exponents. In this section we will explain the reason why this happens and show that a bypass to the problem is not at all trivial. After that, we include a discussion on what do we know about the rate-reliability functions in the regime of large error exponents.

Let us start by understanding when and why our analysis is limited to the regime of small error exponents. We have to go back to the proof of Theorem \ref{thm:coding} where, in the second inequality of Equation \eqref{eq:coding_TVD_to_Euc}, we use the bound $-\ln F^2\geq 1-F^2$, with $F$ the fidelity between two output distributions. Notice that this bound is only a good approximation when the letterwise fidelity is close to 1. That is, when the probability distributions are similar, which can be related to a finer packing, so smaller error exponents. Conversely, large error exponents are related to wider packings, which result on fidelity values getting closer to $0$, making the bound $-\ln F^2\geq 1-F^2$ grossly inaccurate. So, when applying this bound, we limit our analysis to small error exponents. 

The reader might wonder now why it is at all necessary to apply such a problematic bound. Well, in order to apply our proof idea, we need a metric so that a packing can be defined, but the negative logarithm of the fidelity [second line in Equation \eqref{eq:coding_rate_final}] is not a well-defined distance, as it does not satisfy the triangle inequality. This is why we use the bound $-\ln F^2\geq 1-F^2$ together with the lower bound in Equation \eqref{eq:fidelity->euclidean}, conveniently obtaining an Euclidean metric. As we have already discussed, this step is a good approximation for the small exponent regime, which provides the framework for the superlinearity study and allows the recovery of the capacity results from \cite{CDBW:DI_classical} (which were the initial objectives of this work). However, it is increasingly inexact for bigger error exponents making the bound inadequate in the regime of large error exponents.

The lack of a lower bound able to well-approximate $-\ln F^2$ in the range of small fidelities (for which $-\ln F^2$ grows to infinity) and from which a genuine distance can be extracted, suggests that our proof will probably not be trivially generalised to the large error exponent regime. Different ideas are needed. However, we can make some comments on how the rate-reliability function behaves in the large exponent regime, as follows.

To start, we can upper-bound the rate for DI over any channel $W$ meaningfully for arbitrary error exponents starting from a transmission code for $W$. It is clear that any transmission code without randomness on the encoder and error probability $\lambda=2^{-nE_\text{T}}$ is also a DI code with $2^{-nE_\text{DI}}=\lambda_1=\lambda_2=\lambda$. This is because to identify any message $m$, we could use the transmission code to read the decoded message $\hat{m}$, and decide whether they are the same $m\stackrel{?}{=}\hat{m}$, hence the DI errors and rate would then be equal to the error and the rate of the transmission code used. Thus, the rate-reliability function for transmission is a lower bound on the rate-reliability function for DI: $E_\text{DI}(R) \geq E_\text{T}(R)$. Furthermore, by using a sequential decoding of the $N=2^{nR}$ transmission messages using an $(n,N,\lambda,\lambda)$-DI code (decoding a message by trying to identify each possible output), we can bound
\begin{equation}
  2^{-nE_\text{T}} 
   = \lambda
   \leq 2^{nR}2^{-nE_\text{DI}}
   =2^{-n(E_\text{DI}-R)},
\end{equation}
which directly implies the following upper bound on the rate-reliability function for DI: $E_\text{DI}(R) \leq E_\text{T}(R)+R$.

\section{Asymmetric error regimes}
\label{sec:steins}
All our previous results study the performance of DI codes with errors vanishing exponentially fast: $\lambda_1=2^{-nE_1}$ and $\lambda_2=2^{-nE_2}$. The error exponents $E_1,E_2>0$ are taken to be larger than some threshold $E_1,E_2>E$, and this $E$ is actually the defining value of the different bounds. Therefore, the codes analysed can only produce error-symmetric asymptotic achievable triples $(R,E,E)$. 

This begs the natural question: what happens to the rate when the errors are asymmetric? Can we achieve superlinear codes when only one error vanishes slowly? Here we analyse the following maximally asymmetric settings:
\begin{itemize}
    \item \textbf{Stein regime:} $\lambda_1<1$ and $\lambda_2=2^{-nE_2}$,
    \item \textbf{Sanov regime:} $\lambda_1=2^{-nE_1}$ and $\lambda_2<1$.
\end{itemize} 
These two regimes mirror two well-known classical asymmetric hypothesis testing settings. In that context, the goal is to distinguish between two probability distributions (hypotheses) $P$ and $Q$, defined on the same sample space $\cY$, using $n$ independent realizations of one of them. Two errors may occur: accepting $P$ (the null hypothesis) when the samples were produced by $Q$, and the converse case where $Q$ is accepted (the alternative hypothesis) while the samples come from $P$ \cite{CK:book2011}.  

\begin{remark}
In the settings described above, we refer to the two types of errors using the parameters \( \lambda_1 \) and \( \lambda_2 \), defined explicitly in the identification context. While it may be tempting to describe these errors as ``false positives'' or ``false negatives,'' such terminology can lead to confusion, as their interpretation is not consistent across domains.

For example, in classical hypothesis testing, the null hypothesis often models a healthy patient, and the alternative an illness. A ``positive'' result means accepting the alternative, so a type I error is interpreted as a false positive. In identification, on the other hand, a ``positive'' outcome naturally refers to accepting the message under test. In this case, the first type of error (failing to identify the transmitted message) becomes a false negative, and the second (accepting an incorrect message) becomes a false positive. This reversal illustrates how informal labels, which are widely used in the literature, can obscure rather than clarify. 
%It is generally preferable to define the error events precisely in terms of the operational setting.
\end{remark}

In the Chernoff-Stein Lemma setting, the probability $\alpha_n$ of accepting the alternative hypothesis when the null is true is required to be upper bounded for all $n$ by $\epsilon\in(0,1)$; then, the probability $\beta_n$ of accepting the null hypothesis when the alternative is true decays exponentially with $n$, and the best achievable exponent is given by the Kullback-Leibler divergence $D(P\|Q):=\sum_{y\in\cY}P(y)\log[P(y)/Q(y)]$ of the two distributions. Indeed,
\begin{equation}
\lim_{n\rightarrow\infty}-\frac{1}{n}\log\beta_n=D(P\|Q).
\end{equation}

\begin{comment}
This is an interesting case as it corresponds to the Stein's setting of asymmetric hypothesis testing: the alternative hypothesis, if correct, should be detected with decent probability; but the null hypothesis should be falsely rejected only with exponentially small probability (or the other way round). 
Indeed, the names of these settings reference the well-known Stein's Lemma in Hypothesis testing \cite{CK:book2011}, where the goal is to distinguish between two probability distributions (hypothesis) $P$ and $Q$, defined on the same sample space $\cY$, using $n$ independent realizations of one of them. 
The lemma states that given a bounded probability of false alarm $0<\epsilon<1$ (inconveniently the probability of the first kind in hypothesis testing, but of second kind in identification), the probability of missed detection $\beta$ vanishes exponentially with $n$, and the optimum (Stein's) exponent governing this decay is given by the Kullback-Leibler (KL) divergence $D(P\|Q):=\sum_{y\in\cY}P(y)\log[P(y)/Q(y)]$ of the two distributions that have to be distinguished. Indeed, if we let $\beta_n$ be the probability of missed detection for block length $n$,
\begin{equation}
\lim_{n\rightarrow\infty}-\frac{1}{n}\log\beta_n=D(P\|Q).
\end{equation}
\end{comment}

In the finite block length regime, a more refined quantity is used to characterize the tradeoff between the errors: the hypothesis testing relative entropy
%we use the hypothesis testing relative entropy, which is defined for two distributions $P$ and $Q$ over an alphabet $\cY$, and a smoothing parameter $\epsilon$ as:
\begin{equation}
  D_h^\epsilon(P\|Q) 
    := -\log\inf_{\cS\subset\cY\atop P(\cS^c)\leq\epsilon} Q(\cS),
\end{equation}
where $\cS$ is a decision region. This object converges to the KL divergence for increasing $n$: 
$\lim_{n\rightarrow\infty}\frac{1}{n}D_h^\epsilon(P^{\ox n}\|Q^{\ox n})=D(P\|Q)$ for any $\epsilon\in(0,1)$, connecting with the optimum exponent in the asymptotic regime. 

Clearly, if we take a good DI code in the Stein regime, the identification test $\cE_j$ satisfies $W_{u_j}(\cE_j^c) \leq \lambda_1$ and $W_{u_k}(\cE_j) \leq \lambda_2$ (for $k\neq j$). Hence we get immediately, by definition,
\begin{equation}
  D_h^{\lambda_1}(W_{u_j}\|W_{u_k}) \geq 
  -\log\lambda_2 = nE_2.
\end{equation}
Similarly, by taking $\cE_k$ as the test, we get for a good DI code in the Sanov regime that $D_h^{\lambda_2}(W_{u_j}\|W_{u_k}) \geq nE_1$. In other words, all good DI codes in the Stein (or Sanov) regime have output distributions with a minimum pairwise hypothesis testing relative entropy linear in $n$. We will use this property to prove converses for general channels in the Stein and Sanov error settings.
%\begin{theorem}
%Any good $(n,N,\lambda_1,\lambda_2)$-DI code in the Stein or Sanov error regime can only have code word lengths of size $\log N\leq o(n\log\log n)$.
%\end{theorem}

We are motivated to transform the entropies above into R\'enyi relative entropies, which are generally easier to handle, and above all additive. Given a parameter $\alpha\in(0,\infty)\setminus\{1\}$ they are defined as
\begin{equation}\label{eq:Renyi_def}
D_\alpha(P\|Q)=\frac{1}{\alpha-1}\log\sum_{y\in\cY}P(y)^\alpha Q(y)^{1-\alpha},
\end{equation}
and connected to the hypothesis testing relative entropies, for both $\lambda_1$ and $\lambda_2$ and $\alpha>1$ to be fixed later, through the following bound \cite{CMW16:Dh-to-D_alpha}:
\begin{equation}
\!\!D_h^{\epsilon}(W_{u_j}\|W_{u_k})\!\leq\! D_\alpha (W_{u_j}\|W_{u_k})+\frac{\alpha}{1-\alpha}\log\left(\!\frac{1}{1-\epsilon}\!\right).
\end{equation}
%where recall that the notation $\lambda_{1,2}$ means the same equation holds for both $\lambda_1$ and $\lambda_2$. 
As the R\'enyi relative entropies are additive under product distributions, we will be able to transform terms referring to code words into sums of letter-wise objects. Indeed, given two codewords $u_j=x_1\dots x_n\in\cX^n$ and $u_k=x'_1\dots x'_n\in\cX^n$,
\(D_\alpha(W_{u_j}\|W_{u_k})=\sum_{i=1}^n D_\alpha(W_{x_i}\|W_{x'_i}).\) 

For our converse proof, we want to use the linear lower bounds of $D_h^{\lambda_{1,2}}$ described above, which in turn lower bound the sum of letter-wise R\'enyi entropies as follows:
\begin{equation}
\label{eq:Renyi_converse_bound}
\sum_{i=1}^n D_\alpha(W_{x_i}\|W_{x'_i})\geq nE_{1,2}-\frac{\alpha}{1-\alpha}\log\left(\frac{1}{1-\lambda_{2,1}}\right).
\end{equation}

Next, we create a partition of the code such that, no two different distributions in each subgroup satisfy the inequality above, meaning that each subset can only contain a single codeword. The task is therefore reduced to counting the number of subsets in the partition.

However, looking at Equation~\eqref{eq:Renyi_def}, we see that the sum can diverge if there are small values of $W_{x'_i}(y)$ (as this term has a negative exponent in the regime $\alpha>1$). In this case, the lower bound we want to use in Eq.~\eqref{eq:Renyi_converse_bound} becomes trivial. We overcome this issue by 
%forcing all output probabilities to be lower bounded by some big enough parameter. A code partition with such a restriction can be defined as follows: 
defining the following partition:
let $\widetilde{\cX}_\ell=W(\cX_\ell)$ be subsets of the output probability space $W(\cX)=:\widetilde{\cX}=\dot{\bigcup}_{\ell=1}^L\widetilde{\cX}_\ell$ such that for all $\ell\in[L]$, $x,x'\in\cX_\ell$, and $y\in\cY$:
\begin{equation}\label{eq:partition}
\frac{1}{1+\delta}W_{x'}(y)\leq W_x(y)\leq (1+\delta) W_{x'}(y),
\end{equation}
with $\delta>0$. With this, if for some $\ell\in[L]$ we have a problematic $W_{x'}(y)=0$, then all the other $x\in\cX_\ell$ are also $W_x(y)=0$, so the sum in Equation~\eqref{eq:Renyi_def} will no longer diverge for distributions in the same partition (it will just be 0). In consequence, the sum in the right hand side of Eq.~\eqref{eq:Renyi_converse_bound} will not diverge either. In general, for all $x_i,x_i'\in\cX_\ell$
\begin{equation}
\begin{split}
D_\alpha(W_{x_i}\|W_{x'_i})&=\frac{1}{\alpha-1}\log\sum_{y\in\cY}W_{x'_i}(y)\left(\frac{W_{x_i}(y)}{W_{x'_i}(y)}\right)^\alpha\\
&\leq \frac{1}{\alpha-1}\log\sum_{y\in\cY}W_{x'_i}(y)\left(1+\delta\right)^\alpha\\
&=\frac{\alpha}{\alpha-1}\log(1+\delta)\stackrel{!}{\leq}\frac{E_{1,2}}{2},
\end{split}
\end{equation}
where the first inequality follows from the upper bound in Equation~\eqref{eq:partition}, and the last can always be true by defining the parameter $\delta>0$ suitably. In block length $n$, we will have for any two sequences $x^n,x'^n\in\cX_{\ell^n}$, where $\ell^n=\ell_1\dots\ell_n$ and $\cX_{\ell^n}=\cX_{\ell_1}\times\dots\times\cX_{\ell_n}$,
\begin{equation}\label{eq:renyi_sum_max}
\sum_{i=1}^n D_\alpha(W_{x_i}\|W_{x'_i})\leq n\frac{E_{1,2}}{2}. 
\end{equation}
So, any two sequences $x^n,x'^n\in\cX_{\ell^n}$ in the same partition have maximum R\'enyi entropy sum given by Equation~\eqref{eq:renyi_sum_max}, and we have seen that any two codewords need to have this sum lower bounded by $nE_{1,2}-C$, with $C$ the constant last term in Equation~\eqref{eq:Renyi_converse_bound}. Therefore, for any big enough $n$ (specifically for any $n>n_0 := -2\alpha\log(1-\lambda_{1,2})/[E_{2,1}(\alpha-1)]$), each region of the partition cannot contain the outputs of two different valid codewords. I.e., necessarily, $N\leq L^n$. It only remains to upper-bound the size of such a partition.

Before presenting the general case, let us start with the particular family of channels which have all the output probabilities bounded away from $0$ by some positive constant $\omega>0$. 
%That is, for all $x\in\cX$ and $y\in\cY$, $W(x|y)\geq\omega$. 
Notice that the partition defined above is very natural for these channels, as the problematic region (when we have probabilities close to zero) simply does not appear. 

\begin{theorem}
\label{thm:Stein_preliminar}
Given a channel $W$ and a constant $\omega>0$ such that for all $x\in\cX$ and $y\in\cY$, $W(x|y)\geq\omega$, any DI code $\cC$ for $W$ in the Stein or Sanov regime can have at most a linear scaling rate $R(n)\leq O(1)$. 
%In other words, $|\cC|\leq2^{O(n)}$.
\end{theorem}

\begin{proof}
As we have seen, we just need to upper bound the size of the partition. Given some $y\in\cY$ and $x\in\cX_\ell$, notice that the closest probability of $y$ for an $x'\in\cX_{k\neq\ell}$ is at most $W_x(y)/(1+\delta)$ by virtue of Equation~\eqref{eq:partition}, and the closest probability for an $x''\in\cX_{j\neq k,\ell}$ is $W_x(y)/(1+\delta)^2$. As a matter of fact, we will find valid $x^{(i)}$ values in $L$ different partitions until $W_x(y)/(1+\delta)^L<\omega$, as the channel does not have probabilities that small. Then, we can upper bound the number of allowed partitions $L$ with
%Starting from any $x\in\cX_\ell$ and by simple combinatoric arguments, there must exist a $y\in\cY$ such that $W_x(y)\geq\frac{1}{|\cY|}$. Now, an $x'\in\cX_{a\neq\ell}$ in the closest partition has at least $W_{x'}(y)\geq\frac{1}{1+\delta}\frac{1}{|\cY|}$. With recursive repetitions of this argument, it is clear that for an $x''$ $L$ partitions away from the initial one we have \(W_{x^{(L)}}(y)\geq\frac{1}{(1+\delta)^L}\frac{1}{|\cY|}\). Which, by definition of the channel, always has to be larger than $\omega$. Then,
\begin{equation}\label{eq:stein_code_size}
\omega\leq\frac{1}{(1+\delta)^L}\, \implies\,
L\leq-\frac{\log\omega}{\log(1+\delta)}=:A(\omega,\delta).
\end{equation}
Where we have used that, for any pair $(x,y)$, the probability $W_x(y)\leq1$ in the expression on the left, and simple algebra to find the expression on the right, where we have also defined the positive constant $A(\omega,\delta)>0$. The code size is then bounded as claimed: $|\cC|= N\leq L^n\leq 2^{n\log A}$, and the rate thus $R(n)=\frac{1}{n}\log N \leq \log A(\omega,\delta)=O(1)$.
\end{proof}

This result shows that for channels with all possible output probabilities uniformly bounded away from $0$, the code length scaling in the Stein or Sanov error regimes (that is, when we only force one error type to go exponentially fast to zero) can only be linear. Next, we study the general case, lifting the restriction on the minimum size of the output probabilities. We find that the superlinear scaling of order $n\log n$ (linearithmic) is excluded both for the Stein and Sanov error regimes.

\begin{theorem}\label{thm:Stein_general}
Any code $\cC$ in the Stein or Sanov regime for DI over an unrestricted channel $W$ can have at most a rate scaling as $R(n)\leq O(\log\log n)$. 
%In other words $|\cC|\leq2^{O(n\log\log n)}$.
\end{theorem}
\begin{proof}
To be able to use the proof idea of Theorem \ref{thm:Stein_preliminar}, we have to somehow avoid the very small probability values $W_x(y)$. To this end, we 
start by defining a new channel $V$ approximating $W$ such that for all $x\in\cX$ and $y\in\cY$, 
\begin{equation}
\label{eq:neq_channel_V}
V_x(y)=\left\{\frac{1}{K}W_x(y)\quad\text{if}\quad W_x(y)\geq\frac{\delta}{n|\cY|},
\atop0\hspace{1.5cm}\text{if}\quad W_x(y)<\frac{\delta}{n|\cY|},\right.
\end{equation}
with $\displaystyle K = \!\!\!\! \sum_{y:W_x(y)\geq\frac{\delta}{n|\cY|}}\!\! W_x(y) \geq 1-\frac{\delta}{n}$ a normalization constant. Here, $0<\delta<1$ is a parameter to be fixed later.

%Notice that we can always do such a transformation with some classical pre-processing, which can just be seen as part of the coding. 
This transformed channel will have different errors $\lambda_1'$ and $\lambda_2'$ which we need to control in the Stein or Sanov regime. In fact, notice that the particular condition in Equation~\eqref{eq:neq_channel_V} has been chosen such that 
\begin{equation}
\begin{split}
\frac12\|W_{x^n}-V_{x^n}\|_1&\leq\frac12\sum_{i=1}^n\|W_{x_i}-V_{x_i}\|_1\\
&=\frac12\sum_{i=1}^n\sum_{y\in\cY}|W_{x_i}(y)-V_{x_i}(y)|\\
&\leq \frac12\sum_{i=1}^n\sum_{y\in\cY}\frac{\delta}{n|\cY|}=\frac{\delta}{2},
\end{split}
\end{equation}
where the first inequality follows from the subadditivity of the total variation distance under tensor products (triangle inequality), and the last from the bound we have chosen for our modified channel in Equation~\eqref{eq:neq_channel_V}. Now, letting $u_j\neq u_k\in\cX^n$ be two different words from a good DI code for $W$, and $\cE_j\subset\cY^n$ the decision region for the word $u_j$, we get:
\begin{align}
\lambda_1'&=\sup_{j\in[N]}V_{u_j}(\cE_j^c)\leq\sup_{j\in[N]}W_{u_j}(\cE_j^c)+\frac{\delta}{2}\leq\lambda_1+\frac{\delta}{2}, \label{eq:lambda_1_Big}\\
\lambda_2'&=\!\!\!\sup_{j\neq k\in[N]}\!\!\!V_{u_j}(\cE_k)\leq\!\!\!\sup_{j\neq k\in[N]}\!\!\!W_{u_j}(\cE_k)+\frac{\delta}{2}\leq\lambda_2+\frac{\delta}{2}\label{eq:lambda_2_Big}.
\end{align}
%with the superindex $c$ indicating the complementary set. 
When $\lambda_{1,2}<1$ and $\delta$ is small enough, the errors of the transformed channel $V$ maintain the bounded nature $\lambda'_{1,2}<\lambda_{1,2} + \frac{\delta}{2} < 1$. However, the exponential bound $\lambda_{1,2}<2^{-nE_{1,2}}$ is lost, and we have to be more careful in this regime. For this, observe the following letter-wise domination for all $x,y$: $V_x(y)\leq \frac{1}{1-\frac{\delta}{n}}W_x(y)$. Hence, 
\begin{equation}
  V_{x^n}(y^n) 
    \leq \left(\frac{1}{1-\frac{\delta}{n}}\right)^n W_{x^n}(y^n) 
    \leq e^{2\delta} W_{x^n}(y^n),
\end{equation}
where we have used $\delta<1$.
For any subset $\cE\subset\cY^n$ we thus find $V_{x^n}(\cE)\leq e^{2\delta} W_{x^n}(\cE)$, so
\begin{align}
\lambda_1'&=\sup_{j\in[N]}V_{u_j}(\cE_j^c)\leq\sup_{j\in[N]}e^{2\delta} W_{u_j}(\cE_j^c)\leq e^{2\delta}\lambda_1, \label{eq:lambda_1_small}\\
\lambda_2'&=\!\!\!\sup_{j\neq k\in[N]}\!\!\!V_{u_j}(\cE_k)\leq \!\!\!\sup_{j\neq k\in[N]}\!\!\! e^{2\delta} W_{u_j}(\cE_k)\leq e^{2\delta}\lambda_2\label{eq:lambda_2_small}.
\end{align}
Therefore, any DI code for $V$ in the Stein error regime is a code for $W$ which, by Equations~\eqref{eq:lambda_1_small} and \eqref{eq:lambda_2_Big}, is also in the Stein regime. Taking Equations~\eqref{eq:lambda_1_Big} and \eqref{eq:lambda_2_small}, we can argue the same for the Sanov regime. 

It only remains to upper-bound the size of a code for the transformed channel $V$. Notice that, if we restrict the input to letters with output probability distributions having the same support $\cY_s\subset\cY$ (with $s\in [S]$ and $S\leq2^{|\cY|}$), the problem effectively becomes the one in Theorem \ref{thm:Stein_preliminar}, so we can reuse the partition strategy with the only difference that now the parameter $\omega$ that bounds all probabilities is a function of $n$, indeed we have imposed $\omega(n)=\frac{\delta}{n|\cY|}$. Directly form Equation~\eqref{eq:stein_code_size} we extract:
\begin{equation}
 L \leq -\frac{\log\omega(n)}{\log(1+\delta)}
   =\frac{\log n-\log(\delta/|\cY|)}{\log(1+\delta)}=O(\log n).
\end{equation}
We can finally bound the size of the code by $L^n$ times the total number of supports in $\cY^n$, which is bounded by $|S|^n\leq2^{n|\cY|}$:
\begin{equation}
  N \leq L^n2^{n|\cY|}
    =    2^{n(|\cY|+\log L)}
    \leq 2^{O(n\log\log(n))},
\end{equation}
and therefore the rate $R(n) = \frac1n\log N \leq O(\log\log n)$.
\end{proof}

\begin{remark}\label{remark:overcount}
The last steps in the proof above may represent a huge overcount. Indeed, it is highly improbable that different codes constructed for each support can produce a reliable general code for the overall channel without losing any word (just a sequence is received, so there is no way of knowing the support of the probability distribution that produced it). However, notice that this does not affect the size of the code at leading order, as $L$ scales with logarithmically in $n$, while the number $S$ of supports is a constant.

Furthermore, however, we do not know if the bound $R\leq O(\log\log n)$ is tight. The double logarithmic scaling could be an artifact of the proof. Indeed, we have no code constructions for the Stein or Sanov error settings that achieve such a scaling, and it is possible that $R\leq O(1)$ holds generally, as for the special case treated in Theorem \ref{thm:Stein_preliminar}. We leave this open problem for future investigation.
\end{remark}

The results in this section show that having one of the two errors vanishing slowly is not enough to recover linearithmic rates, we actually need both of them to be slow. This shows, once more and in an even more dramatic manner, the fragility of the superlinear behaviour which, while being a characteristic of DI codes, appears only in a very specific, extreme error setting.

\section{Extension to quantum channels}
\label{sec:quantum}
Here we prove how the main results in Section \ref{sec:bounding} can be extended to classical-quantum (cq)-channels and to general quantum channels under the restriction that we only use product states when encoding. For the sake of brevity, we refer the reader to \cite[Sect.~6]{CDBW:DI_classical} for all the necessary quantum information tools (alternatively, see \cite{Wilde-book,holevo:stat-structure-book,CK:book2011}) needed for the following, as well as for a brief review on identification via quantum channels \cite{loeber:PhDThesis,AW:StrongConverse,Winter:QCidentification,Winter:QC-ID-randomness,Winter:Review,CDBW:ID-0-sim,CDBW:ID-0-sim:ICC}. 

Let us start studying the reliability functions for DI over cq-channels $W:\cX\rightarrow \cS(B)$, which have a classical letter (a number $x\in\cX$) as input and a quantum state $W_x\in\cS(B)$ as output. The $n$-extension $W^n:\cX^n\rightarrow\cS(B^n)$ maps input code words $x^n\in\cX^n$ to $W_{x^n}=W_{x_1}\ox\dots\ox W_{x_n}$. 

For the direct part, we just need to define an informationally complete POVM (positive operator-valued measure) a collection of positive semidefinite operators $T=(T_y:y\in\cY)$, such that \(\sum_y T_y = \mathds{1}\). Then, we can form the concatenated channel $\overline{W}=\cT\circ W$ of the cq-channel $W$ followed by the quantum-classical (qc)-channel $\cT(\sigma)=\sum_y(\Tr\sigma T_y)\ket{y}\bra{y}$, which effectively outputs the probability of each outcome $y$ when the POVM $T$ is used as measurement. Now, since that the channel $\overline{W}$ can be identified with the classical channel $\overline{W}(y|x)=\Tr W_xT_y$ and $\widehat{\cX}:=\overline{W}(\cX)\subset\cP(\cY)$ its image of output distributions we can reduce the problem directly to the setup of Theorem \ref{thm:coding} to get the following:

\begin{theorem}
\label{thm:cq-coding}
For any $0<t<1$, and $E(n)>0$, there exists a simultaneous DI code over the cq-channel $W:\cX\rightarrow\cS(B)$ (and its associated $\overline{W}$ as above) on block length $n$ with error exponents $E_1(n)\geq E(n)-\frac1n$ and $E_2(n)\geq E(n)-\frac3n$, and its rate lower-bounded 
\begin{align}
    R(n) &\geq (1-t)\log\left[\Pi_{\sqrt[4]{\frac{6E(n)}{ct^2}}} \left(\!\sqrt{\!\widehat{\cX}}\right)\right]\\
    &\phantom{\geq.} - H(t,1-t) - O\left(\frac{\log n}{n}\right).\nonumber\qed
\end{align}
\end{theorem}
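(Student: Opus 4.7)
The plan is to reduce the statement directly to Theorem \ref{thm:coding} by exploiting the fact that the concatenation $\overline{W} = \cT \circ W$ is a genuine classical channel whose output distribution set is precisely $\widehat{\cX}$. Concretely, $\overline{W}(y|x) = \Tr(W_x T_y)$, and by tensor-product multiplicativity $\overline{W}^n(y^n|x^n) = \Tr\bigl(W_{x^n} (T_{y_1}\ox\cdots\ox T_{y_n})\bigr)$. Consequently, Theorem \ref{thm:coding} applied to $\overline{W}$ yields, for any $0<t<1$ and $E(n)>0$, a classical DI code $\{(u_j,\cE_j):j\in[N]\}$ with $u_j\in\cX^n$, $\cE_j\subset\cY^n$, and whose rate and error exponents satisfy exactly the bounds in the statement, but with $\widehat{\cX}$ in place of $\widetilde{\cX}$.

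Next I would lift this classical code to a cq-DI code for $W$. For each $j\in[N]$ define the decoding operator
\[
  D_j \;:=\; \sum_{y^n\in\cE_j} T_{y_1}\ox\cdots\ox T_{y_n},
\]
which is an element of the product POVM $T^{\ox n}$ coarse-grained according to the classical set $\cE_j$. By linearity of the trace,
\[
  \Tr\bigl(W_{u_k}D_j\bigr) \;=\; \sum_{y^n\in\cE_j}\Tr\bigl(W_{u_k}T_{y_1}\ox\cdots\ox T_{y_n}\bigr) \;=\; \overline{W}_{u_k}(\cE_j),
\]
so the errors of first kind ($k=j$) and second kind ($k\neq j$) of the quantum code coincide identically with the corresponding errors of the classical code for $\overline{W}$, and are therefore controlled by the same exponents $E(n)-\frac{1}{n}$ and $E(n)-\frac{3}{n}$.

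Finally, the resulting code is \emph{simultaneous} in the required sense because every $D_j$ arises as a classical post-processing of the single fixed product measurement $T^{\ox n}$: one physically performs $T^{\ox n}$ on the channel output and then classically decides, for each $j$, according to membership of the outcome in $\cE_j$. There is essentially no obstacle here beyond bookkeeping; the only subtle point is conceptual, namely that informational completeness of $T$ (which is not invoked in the calculation above) is what guarantees that the geometry of $\widehat{\cX}$ faithfully reflects that of the original quantum output set, so that the packing number appearing in the rate bound is not degenerately small. For the direct part proved here, however, no such hypothesis is needed: the reduction produces a valid simultaneous cq-DI code with the advertised parameters for \emph{any} choice of POVM $T$, and the theorem follows.
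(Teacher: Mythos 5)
Your proposal is correct and follows essentially the same route as the paper: reduce to the classical channel $\overline{W}=\cT\circ W$, apply Theorem~\ref{thm:coding} to $\widehat{\cX}=\overline{W}(\cX)$, and observe that the code is simultaneous because the decoding operators $D_j$ are all classical post-processings of the single fixed product measurement $T^{\ox n}$. Your explicit construction of $D_j=\sum_{y^n\in\cE_j}T_{y_1}\ox\cdots\ox T_{y_n}$ and the remark that informational completeness of $T$ is needed only to keep the packing number of $\sqrt{\!\widehat{\cX}}$ non-degenerate (not for the validity of the bound) are accurate clarifications that the paper leaves implicit.
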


With the same arguments, we can also extend Corollary \ref{cor:coding} and the improved lower bound in Equation~\eqref{eq:improvement_coding} to cq-channels obtaining similar results, albeit changing the lower (for the corollary) or upper (for the improvement) Minkowski dimension of $\sqrt{\!\widetilde{\cX}}$ to that of $\sqrt{\!\widehat{\cX}}$. We remark that these are always rates for \emph{simultaneous} deterministic identification because we measure the output quantum states with the fixed POVM $T^{\ox n}$. See \cite{qhtl-arXiv} for a non-simultaneous code construction that can achieve higher rates in terms of $d_M\!\left(\!\!\sqrt{\!\widetilde{\cX}}\right)$.

We can also extend the results to general quantum channels $\cN:A\rightarrow B$ with an input restriction to a subset $\cX\subset\cS(A)$ which can be effectively reduced to a cq-channel $W:\cX\rightarrow\cS(B)$, $W_x=\cN(x)$. The resulting code is defined as a quantum DI $\cX$-code (see \cite[Sect.~6]{CDBW:DI_classical}) and the corresponding lower bound on the simultaneous rate-reliability function is given, thanks to the reduction from quantum to classical-quantum channels, by Theorem \ref{thm:cq-coding}.

For the converse, we can follow closely the proof of Theorem \ref{thm:converse}. We have to be careful, though, as in the quantum setting we can no longer use the total variation and Euclidean distances to calculate the separation between outputs. We use instead their well-known generalization to quantum states: the \emph{trace} and the \emph{Hilbert-Schmidt distances}, based on the Schatten-$1$ and -$2$ norms, respectively. Given two density matrices $\rho$ and $\sigma$, these distances are related as follows \cite[Lemma 6.2]{CDBW:DI_classical}:
\begin{equation}
  \frac12\|\rho-\sigma\|_1
    \leq \left\|\sqrt{\rho}-\sqrt{\sigma}\right\|_2
    \leq \sqrt{2}\sqrt{\|\rho-\sigma\|_1},
\end{equation}
and the Fuchs-van-de-Graaf inequalities in Equation~\eqref{eq:Classical_FvdG} also hold \cite{FVdG:ineq}. Notice that by denoting the output quantum states as $W_x\in\cS(B)$, the notation of the classical stochastic distances or fidelities between probability distributions is equivalent to the corresponding ones in the quantum setting, and we can actually follow the converse proof step by step. We fix a maximum possible fidelity between output states corresponding to code words from a good DI code, create a covering of $\sqrt{\!\widetilde{\cX}}:=\left\{\sqrt{\rho}:\rho\in\widetilde{\cX}\right\}\subset\cS(B)$ such that any two states in the same ball have a fidelity higher than the initial requirement, and finally count the cardinality of such a covering. We obtain:

\begin{theorem}
\label{thm:cq-converse}
For any DI code over the cq-channel $W:\cX\rightarrow\cS(B)$ on block length $n$ with positive error exponents, $E_1(n),\, E_2(n) \geq E(n) > 0$, the rate is upper-bounded
\begin{equation}
  R(n) \leq \log\left[\Gamma_{\frac12{\sqrt{1-e^{-E(n)/2}}}}\left(\!\sqrt{\!\widetilde{\cX}}\right)\right].\qed\end{equation}
\end{theorem}

Corollary \ref{cor:converse} and the improved bound in Equation~\eqref{eq:improvement_converse} are similarly extended to the quantum setting using again the reduction from general quantum channels with an input state restriction $\cX\subset\cS(A)$ to cq-channels. 
%it is clear that the rate-reliability function of a quantum channel DI $\cX$-code is upper-bounded by Theorem \ref{thm:cq-converse}. 

\medskip
Finally, we sketch how the results from Section \ref{sec:steins} extend to cq-channels with the smallest eigenvalue of the $W_x\in\cS(B)$ universally bounded away from $0$, i.e.~$W_x\geq\omega\1$ for all $x\in\cX$. 
The starting point with the hypothesis testing relative entropy and the comparison with R\'enyi relative entropies is basically the same, both in the Stein and the Sanov setting (see \cite{CMW16:Dh-to-D_alpha}): as before, for any two distinct code words $x^n = u_j \neq u_k = {x'}^n$ of a DI code, which may be non-simultaneous, and $\alpha>1$,
%\begin{align}
%\label{eq:quantum-Stein-hypo:alpha}
%  nE_2 &\leq D_h^{\lambda_1}(W_{u_j}\|W_{u_k}) \nonumber\\
%        &\leq \widetilde{D}_\alpha(W_{u_j}\|W_{u_k}) - \frac{\alpha}{\alpha-1}\log(1-\lambda_1) \nonumber\\
%       &\leq \sum_{i=1}^n \widetilde{D}_\alpha(W_{x_i}\|W_{x_i'}) - \frac{\alpha}{\alpha-1}\log(1-\lambda_1), \\
%\label{eq:quantum-Sanov-hypo:alpha}
%  nE_1 &\leq D_h^{\lambda_2}(W_{u_j}\|W_{u_k}) \nonumber\\
%        &\leq \widetilde{D}_\alpha(W_{u_j}\|W_{u_k}) - \frac{\alpha}{\alpha-1}\log(1-\lambda_2) \nonumber\\
%       &\leq \sum_{i=1}^n \widetilde{D}_\alpha(W_{x_i}\|W_{x_i'}) - \frac{\alpha}{\alpha-1}\log(1-\lambda_2),
%\end{align}
\begin{align}
\label{eq:quantum-Stein-Sanov-hypo:alpha}
  nE_{2,1} &\leq D_h^{\lambda_{1,2}}(W_{u_j}\|W_{u_k}) \nonumber\\
        &\leq \widetilde{D}_\alpha(W_{u_j}\|W_{u_k}) - \frac{\alpha}{\alpha-1}\log(1-\lambda_{1,2}) \nonumber\\
       &\leq \sum_{i=1}^n \widetilde{D}_\alpha(W_{x_i}\|W_{x_i'}) - \frac{\alpha}{\alpha-1}\log(1-\lambda_{1,2}), 
\end{align}
where 
\begin{equation}
\widetilde{D}_\alpha(\rho\|\sigma) 
  = \frac{1}{\alpha-1}\log\Tr\left( \sigma^{\frac{1-\alpha}{2\alpha}} \rho \sigma^{\frac{1-\alpha}{2\alpha}} \right)^\alpha
\end{equation}
is the sandwiched R\'enyi relative entropy. 
Next, we use the following continuity bound for the latter, leveraging the eigenvalue lower bound:
\begin{lemma}[{Cf.~Bluhm~\emph{et~al.}~\cite[Cor.~5.7]{Bluhm:Renyi-continuity}}]
\label{lemma:bluhm-bound}
For $\alpha>1$ and  two states $\rho$ and $\sigma$ with $\rho,\,\sigma \geq \omega\1 > 0$ and $\frac12\|\rho-\sigma\|_1=t$, 
\begin{align}
%  \phantom{:}
  \widetilde{D}_\alpha(\rho\|\sigma) 
    &\leq \log(1\!+\!t) + \frac{1}{\alpha-1}\log\!\left(\! 1 \!+\! t\omega^{1-\alpha} \!+\! \frac{t^\alpha}{(1+t)^{\alpha-1}} \!\right)\nonumber \\
    &\leq \frac{t}{\ln 2}\left[ 1 + \frac{1}{\alpha-1} \left(\frac{1}{\omega^{\alpha-1}} - \left(\frac{t}{1+t}\right)^{\alpha-1} \right) \right]\nonumber \\
    &\leq t\,K(\omega,\alpha).
%     \phantom{===================:}\qedsymbol
\end{align}
\end{lemma}
Using this, we can convert the above Equation~\eqref{eq:quantum-Stein-Sanov-hypo:alpha} into bounds on sums of trace distances: 
%\begin{align}
%\label{eq:quantum-Stein-hypo:trace}
%... &\leq ... \\
%\label{eq:quantum-Sanov-hypo:trace}
%... &\leq ..., 
%\end{align}
\begin{equation}
\label{eq:hypo-tracenorm-bound}
\sum_{i=1}^n \frac12\|W_{x_i}-W_{x_i'}\|_1 
  \geq \begin{cases}
         \frac{E_2}{K(\omega,\alpha)}n + \frac{\alpha\log(1-\lambda_1)}{(\alpha-1)K(\omega,\alpha)}, \\[2ex]
         \frac{E_1}{K(\omega,\alpha)}n + \frac{\alpha\log(1-\lambda_2)}{(\alpha-1)K(\omega,\alpha)}.
       \end{cases}
\end{equation}
Either right hand side has the form $\gamma n - \Delta$. Now, partition $\widetilde{\cX} = \bigcup_{\ell=1}^L \widetilde{\cX}_\ell$ such that the diameter of each $\widetilde{\cX}_\ell$ (with respect to the trace distance) is $\leq\gamma/2$ -- this could for example come from a $\gamma/4$-covering, which has cardinality $L = \Gamma_{\gamma/4}(\widetilde{\cX})$. This means that for any $x^n,\,{x'}^n$ such that $W_{x^n},\,W_{{x'}^n} \in \widetilde{\cX}_{\ell^n} = \widetilde{\cX}_{\ell_1}\times\cdots\times\widetilde{\cX}_{\ell_n}$ (that is, from the same region of the $n$-letter partition), 
\(
  \sum_{i=1}^n \frac12\|W_{x_i}-W_{x_i'}\|_1
    \leq \frac{\gamma}{2}n, 
\)
which is $<\gamma n - \Delta$ for all sufficiently large $n$. In other words, each $\widetilde{\cX}_{\ell^n}$ can contain at most one $W_{u_j}$, thus $N \leq L^n$, or $R(n) \leq \frac1n\log N \leq \log L$. Note that $L$ is a function of $\gamma$, which in turn is determined by $E_{1,2}$, $\alpha>1$ and $\omega>0$, but in any case a constant with respect to $n$.
Thus we proved the following quantum generalisation of Theorem \ref{thm:Stein_preliminar}: 
\begin{theorem}
\label{thm:Stein_cq}
Given a cq-channel $W$ and a constant $\omega>0$ such that for all $x\in\cX$, $W_x\geq\omega\1$, any DI code $\cC$ for $W$ in the Stein or Sanov regime can have at most a linear scaling rate $R(n)\leq O(1)$. 
\qed
\end{theorem}

On the other hand, without this eigenvalue lower bound there are examples of cq-channels attaining linearithmic DI rates in the Sanov setting with $\lambda_1=0$ (i.e.~$E_1=+\infty$) and $\lambda_2\rightarrow 0$ \cite[Sect.~V]{qhtl-arXiv}. Crucially, the codes constructed there are non-simultaneous. We do not know if linearithmic rates are possible for cq-channels in the Stein setting.

%The main objective of this (rather schematic) section is to show that all the results in the classical part of this work extend to the quantum setting of cq-channels and fully quantum channels restricted to product states in a similar way as the classical capacity results from \cite{CDBW:DI_classical} extended to the same quantum setting in \cite[Sect.~6]{CDBW:DI_classical}. 
%Hence, providing here a complete encompassing of the results in our previous \cite{CDBW:DI_classical}.

\section{Conclusions}
\label{sec:conclusions}
We have initiated the study of the rate-reliability tradeoff between the message length and the exponents of the two error probabilities occurring in DI. The main finding is that, unlike in the communication problem, imposing exponentially (in the block length $n$) small errors of first and second kind makes the slightly superlinear scaling of the message length disappear, and replaces it with a ``regular'' linear-scale rate. Even in the case where only one of the two errors vanishes exponentially, superlinearity is lost. We do not have exact formulas for the rate-reliability function, whose derivation remains as the main open question for further research. But our results are strong enough to show that for sufficiently small error exponents a universal behaviour manifests itself: the rate is essentially proportional to the Minkowski dimension of the set $\sqrt{\!\widetilde{\cX}}$ (up to a factor between $\frac14$ and $\frac12$) multiplied by the logarithm of the inverse exponent. 

Technically, we derive upper and lower bounds on the DI rate for fixed reliability exponents and at finite block length. While the proofs are mainly inspired by those in \cite{CDBW:DI_classical}, the analysis in the non-asymptotic regime allows us to separate the information theoretic from the geometric aspects of the problem, resulting in the bounding of the DI rate in terms of error-exponent-dependent packings (for the lower bound, Theorem~\ref{thm:coding}) and coverings (for the upper bound, Theorem~\ref{thm:converse}). The geometric inspection of the bounds in the regime of small errors yields the appearance of the lower and upper Minkowski dimensions featuring in the main results of \cite{CDBW:DI_classical}. There, these two aspects appeared intertwined, as the covering and packing radii were made to directly depend on the block length (which is treated asymptotically), perhaps obscuring the main ideas behind the proof. 
The finite block length approach started in this paper brings this still very abstract but exciting theory closer to potential implementation. 

The present results go some way towards explaining the origin of the surprising superlinear performance characteristic of deterministic identification: here we see that it has its roots in a sufficiently slow convergence of both errors to zero. By decoupling the error exponents from the block length, we discover that they determine the geometric scale at which the packing and covering properties of $\sqrt{\!\widetilde{\cX}}$ determine the (linear) DI rate. Thus, the Minkowski dimension emerges in the regime of small error exponents. 

We furthermore show how to approach the special case of channels with a vanishing Minkowski dimension of the output probability set through a couple of examples, and extend the reliability results to classical-quantum channels and quantum channels under some input restriction. 
Thirdly, we approach the case of large error exponents for which our lower bound becomes trivial, and comment on some characteristics that the reliability curve must have in that regime, leaving a more thorough study for future work.

While the results up to that point make sense mainly in the regime of positive and comparable error exponents, we go beyond this setting in Section~\ref{sec:steins}, where the regime of extremely differently behaving errors is investigated: one error vanishes exponentially fast while the other is just bounded by a constant. Also in this case (which is closely related to the settings of Stein's Lemma and Sanov's Theorem in Hypothesis testing) we find that the superlinear scaling of DI codes is lost.
These sections complete our first reliability study of deterministic identification making it a parallel to the capacity study \cite{CDBW:DI_classical}, in the sense that every capacity result there can now be obtained through the reliability results in this paper, in particular error regimes. 

%Considering the optimistic and pessimistic asymptotic maximum rate in the limit of $n\rightarrow\infty$, we proved that the characteristic superlinear scaling of DI code size scaling can only happen in the regime of slowly vanishing errors. On the other hand, for exponentially decreasing errors the single-exponential code size scaling is restored and we have derived bounds on the rate-reliability function. When instead allowing slowly vanishing error exponents, we recover the optimistic and pessimistic superexponential capacity bounds from \cite{CDBW:DI_classical}.

%It remains as an open question to determine the DI code scaling when merely one of the two errors is exponentially small, i.e.~$E_1>0,\, E_2=0$ or $E_1=0,\, E_2>0$. This is an interesting case as it corresponds to the Stein's setting of asymmetric hypothesis testing: the alternative hypothesis, if correct, should be detected with decent probability; but the null hypothesis should be falsely rejected only with exponentially small probability (or the other way round). In this scenario, our converse gives the previous upper bound of $\frac12 d_M\, n\log n$ on the message length, but our current code constructions only seem to attain linear scaling. 

While our analysis is based on entropy typicality arguments, which are more naturally extendable to continuous and quantum settings, we also demonstrate that it recovers known results previously derived using the method of types and strong typicality \cite{SPBD:DI_power} (see Example~\ref{ex:DMC}).
It is well-known that in classical channel coding over discrete memoryless channels the method of types can be used to derive optimal error exponents. But such techniques are not applicable in our more general setting, which includes continuous and quantum channels where empirical distributions (types) are not well defined. We think that the non-matching upper and lower bounds on the DI capacity cannot simply be attributed to the use of the ``wrong'' typicality concept. 
%Our employing entropy typical sets throughout has the advantage of naturally unifying the general and the discrete case. Indeed, note that while the original proof of DI over DMCs \cite{SPBD:DI_power}, referenced in the introduction \ref{ssec:background}, does indeed use the method of types, in \cite{CDBW:DI_classical} and the present work we show that our bounds can recover the same capacity result (see Example~\ref{ex:DMC}), suggesting that the non-matching upper and lower bounds cannot simply be attributed to the use of the ``wrong'' typicality concept. 

As already mentioned, the biggest open problem is closing the gap between our upper and lower bounds on the rate-reliability functions and capacity results. Indeed, our bounds could be taken to suggest that the linearithmic capacity of DI over general channel is $\dot{C}_\text{DI}(W)=\alpha\underline{d}_M\!\left(\!\!\sqrt{\!\widetilde{\cX}}\right)$, with $\alpha\in[\frac{1}{4},\frac12]$ a constant which we still need to fix. 
The recent result that for Gaussian channels $\alpha\geq\frac38$ \cite{galaxy-codes}, together with the difficulty of lowering the upper bound, might be taken to indicate that our code construction (packing with hypothesis testing lemma), and hence the achievability bound, can be improved. Furthermore, we have examples of particular channels (both classical \cite[Sect.~V-F]{CDBW:DI_classical} and quantum \cite{qhtl-arXiv}) for which suitable decoding sets beyond the entropy typical set can be devised which attain the capacity upper bound.
%Given the apparent tightness of our upper bounds, and the uncountable different ways to construct codes, we are prone to believe that we can find code constructions that over-perform the packing ideas in this paper and in \cite{CDBW:DI_classical}. The results here give valuable information to the design of the codes. For starters, it is evident now that we need the errors to vanish very slowly (otherwise superlinearity is lost). 

%\section*{Acknowledgments}
%\bigskip\noindent
%\small

\renewcommand*{\bibfont}{\footnotesize}
\printbibliography

\end{document}